 \newtheorem{thm}{Theorem}
 \newtheorem{prop}{Proposition}
 \newtheorem{lemma}{Lemma}
 \newcommand{\bbr}{{\mathbb R}}
\newcommand{\bbs}{{\mathbb S}}
\def\scri{\mathcal{I}^+}
\def\be{\begin{equation}}
\def\ee{\end{equation}}
\newcommand{\bn}{{\bf{n}}}
\newcommand{\bp}{{\bf{p}}}
\newcommand{\bq}{{\bf{q}}}
\begin{document}
\title{The massless Einstein-Boltzmann system with a conformal gauge singularity in an FLRW background}

\author[1]{Ho Lee\footnote{holee@khu.ac.kr}}
\author[2]{Ernesto Nungesser\footnote{ernesto.nungesser@icmat.es}}
\author[3]{Paul Tod\footnote{tod@maths.ox.ac.uk}}

\affil[1]{Department of Mathematics and Research Institute for Basic Science, Kyung Hee University, Seoul, 02447, Republic of Korea}
\affil[2]{Instituto de Ciencias Matem\'{a}ticas (CSIC-UAM-UC3M-UCM), 28049 Madrid, Spain}
\affil[3]{Mathematical Institute, University of Oxford, Oxford OX2 6GG}

\maketitle

\begin{abstract}
We obtain finite-time existence for the massless Boltzmann equation, with a range of soft cross-sections, in an FLRW background with data given at the initial singularity. In the case of positive cosmological constant we obtain long-time existence in proper-time for small data as a corollary.
\end{abstract}

\section{Introduction}
The work in \cite{A} showed that there exists a well-posed Cauchy problem for cosmological solutions of the physical massless Einstein--Vlasov equations, which is to say the Einstein equations with massless, collisionless matter as source, with an initial \emph{conformal gauge singularity}, \cite{LT}. A conformal gauge singularity, or isotropic singularity, is essentially a curvature singularity which can be removed by conformally rescaling the metric, and which is therefore one at which the Weyl tensor is finite. There are reasons for thinking that the initial singularity of our actual physical universe is one at which the Weyl tensor is finite, \cite{rp2}, and that it might in fact be a conformal gauge singularity, \cite{CCC}. With a conformal gauge singularity, in the unphysical, rescaled space-time the singularity is a smooth space-like hypersurface, which we'll call the \emph{bang surface}, and data may be given at this hypersurface. From the stand-point of the physical space-time, there is a curvature singularity at the bang surface, so that these points are not part of space-time, but the singularity can be regularised by rescaling the metric and then data is given actually at what was the singularity. The question naturally arises of extending the work of \cite{A} to the Einstein--Boltzmann equations, that is to say the Einstein equations with collisional matter as source. 

Mathematically, with any of these matter models, the problem is to find an extended set of conformal Einstein equations in the conformally extended manifold for which finite-time existence can be proved with data at the bang surface. This was achieved for a range of polytropic perfect fluids in \cite{AT99a}, for massless Einstein--Vlasov with a spatially homogeneous metric in \cite{AT99}, and for massless Einstein--Vlasov without symmetry in \cite{A}. For both kinds of source, the conformal Einstein equations can be formulated as a Fuchsian system with the pole located at the bang surface but the appropriate Cauchy data are strikingly different in the two cases: for the perfect fluid case the data are simply a Riemannian 3-metric, in fact the metric of the bang surface, with no separate degrees of freedom for the fluid, while for the Einstein--Vlasov case there is a single datum, the initial distribution function subject only to non-negativity and a vanishing dipole condition. The initial metric is extracted from the initial distribution function.

The question was raised\footnote{K.~Anguige and A.~R.~Rendall; Private communication to P.~Tod, 2000.} as to whether the Einstein--Boltzmann case, which is to say the Einstein-Vlasov case with the inclusion of a collision term in the Vlasov equation, might make a bridge between these two cases, with perfect fluids as one limit and Vlasov as the other, depending on the scattering cross-section. This possibility was explored informally in \cite{T03} and received some support. The intention now is to proceed more rigorously. The problem is difficult because the collision term inevitably has singularities which have to be dealt with (these are visible in equations  (\ref{Boltzmann1}) and (\ref{bolt2}) below).

One considers massless particles because, near the bang, one expects the particles to be so energetic that their rest-mass, even if nonzero, would have negligible effect. Likewise, near the bang the cosmological constant $\Lambda$ would be expected to be physically irrelevant, and there have been studies \cite{T07} to indicate that indeed, in the cases so far studied, the inclusion of nonzero rest-mass or $\Lambda$ have negligible effect, but we include the case of a positive cosmological constant to make contact with Penrose's conformal cyclic cosmology (or CCC, \cite{CCC}). In that theory the rest-mass of all elementary particles is zero near the bang and in the remote future. 

There are rather few mathematical results on the Einstein--Boltzmann system since the original existence results of \cite{BCB}, and fewer still on massless Einstein-Boltzmann, where the masslessness introduces extra poles in the collision term (see (\ref{bolt3}): in the case of nonzero mass $m$ the term $q$ in the denominator is replaced by $(q^2+m^2R^2)^{1/2}$ which is bounded away from zero except at the bang surface). This is to be contrasted with the collisionless case, where the linearity of the Vlasov equation means that, in situations with symmetry, many explicit solutions can be simply written down, and in systems without symmetry one still has linearity. 

As a first step in extending \cite{A}, we consider the homogeneous and isotropic case. The metric is Friedman-Lema\^itre-Robertson-Walker (hereafter FLRW) and the scale-factor is fixed by imposing the Einstein equations with trace-free energy-momentum tensor as source. We assume the source is determined by a homogeneous and isotropic distribution function that is subject to the Boltzmann equation, so that this is a self-consistent Einstein-Boltzmann solution, although the Einstein and Boltzmann equations decouple. Since there are rather few mathematical results on the massless Einstein-Boltzmann system, our strategy with respect to an appropriate choice of cross-section is pragmatic: we look among the families of physically-reasonable cross-sections classified for example in \cite{DEJ} for some which lead eventually to a tractable initial-value problem in our setting. We need to make assumptions about the scattering cross-section, specifically about its behaviour at large energies and the dependence on the FLRW scale-factor that this implies.  This is to be expected when one wishes to give data at the bang surface, where the scale factor vanishes. We also need to regularise the Boltzmann equation by redefining the time-coordinate (see equations (\ref{s1}) and (\ref{bolt5})) -- proper time is defined by the physical metric and this is singular at the bang surface, so that one should be prepared to change the time-coordinate to regularise the equations -- and this also imposes a restriction on the scattering cross-sections considered. However all these constraints are satisfied by the family of cross-sections that we consider in our main result, Theorem 1. It is worth noting that the origin of the redefined time coordinate $s$ in (\ref{s1}) can be chosen to coincide with the origin of proper time $t$, namely at the bang surface, but with a positive $\Lambda$ there is only a finite amount of $s$-time before the conformal infinity $\scri$ which is attached at infinite proper time $t$. Since we eventually prove existence for finite $s$-time, one may reduce the norm of the initial distribution function $f_0$ so that the solution exists for finite $s$-time but infinite $t$-time in the case $\Lambda>0$.

Once this FLRW case can has been successfully handled, the next step is to extend the analysis to spatially-homogeneous space-times and a homogeneous and isotropic distribution function, following \cite{AT99} and \cite{LN2}. 

\medskip

The plan of the paper is as follows. In Section 2, we discuss the flat FLRW metric and the Einstein-Boltzmann equations with the corresponding symmetry assumption. The massless Boltzmann equation gives rise to a trace-free energy-momentum tensor and, given a choice of positive, zero or negative cosmological constant,  this determines the FLRW scale-factor and therefore the whole metric via the Einstein equations. We have still to prove well-posed-ness of the Boltzmann equation with some choice of cross-section, but this is now a problem set in a given space-time metric. We give a  discussion of possible scattering cross-sections, following the discussions and classifications in \cite{DEJ}, \cite{LR} and \cite{S10}, and we focus on a small family of so-called soft ones which satisfy the constraints we need to impose. To add the bang surface as a boundary requires conformally rescaling the metric, so we also describe the behaviour under conformal rescaling of the massless Boltzmann equation in an FLRW background. 

In Section 3, we give the statement and proof of long-time existence in $s$-time of solutions of a modified Boltzmann equation, with a cut-off imposed at small momentum to remove the singularity in the collision integral at zero centre-of-mass energy -- this singularity arises because we are treating massless particles, and this is a characteristic difficulty of the massless case, not present in the massive case. This is the content of Proposition 1 in Section 3.1. Then, in Section 3.2, comes the key novelty and hardest part of the proof which is to prove convergence of the solution to a solution of the Boltzmann equation as the cut-off is removed. This only goes through for a finite $s$-time, i.e. existence is proved for finite time but in the redefined time coordinate. This is Theorem 1. To be explicit:

\begin{thm}%{\bf Theorem 1}%\label{thm1}
Let $f_0\in L^1(\bbr^3)$ be initial data for the distribution function at $t=0$, equivalently $s=0$, satisfying $f_0\geq 0$ and 
\[
\int_{\bbr^3}f_0(p)p^m d^3\!p<\infty
\] 
for $-2\leq m\leq 2$. Then, there exists a positive value $T$ of the redefined time coordinate $s$ such that the massless Boltzmann equation \eqref{bolt5} with scattering cross-sections in the family given in (\ref{cx1}) has a unique solution $f\in C^1([0,T];L^1(\bbr^3))$ for the distribution function in $s$-time which is non-negative and satisfies
\[
\sup_{0\leq s\leq T}\int_{\bbr^3} f(s,p)p^m d^3\!p\leq C_T,
\]
for $-1\leq m\leq 2$.
\end{thm}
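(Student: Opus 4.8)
The plan is to construct the solution as the limit, as a momentum cut-off is removed, of the global-in-$s$ solutions furnished by Proposition~1. For each small $\delta>0$ let $f_\delta\in C^1([0,\infty);L^1(\bbr^3))$ denote the non-negative solution of the $\delta$-regularised equation provided by that proposition. The first task is to derive moment bounds for $f_\delta$ that are \emph{uniform in $\delta$} on a common interval $[0,T]$. Writing $M_m(s)=\int_{\bbr^3}f_\delta(s,p)\,p^m\,d^3\!p$, the collision invariants keep $M_0$ and $M_1$ (particle number and energy) under control, and differentiating $M_m$ along \eqref{bolt5}, using the bounds on the (soft) cross-sections permitted by \eqref{cx1} together with elementary Povzner-type inequalities for the post-collision momenta, one obtains a closed system of differential inequalities of the schematic form $\tfrac{d}{ds}M_m\le C\,P(M_{-2},\dots,M_2)$ for $-1\le m\le 2$, with $P$ at most quadratic. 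Here the finiteness of $M_{-2}(0)$ is what allows the chain to be closed at its lower end (and it will reappear in the cut-off error estimate below), which is why it is imposed on the data although it need not be propagated. Integrating this finite system yields $T>0$, independent of $\delta$, and constants $C_T$ with $\sup_{0\le s\le T}M_m(s)\le C_T$ for $-1\le m\le 2$.

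Granted these uniform bounds, the second and central step is to show that $\{f_\delta\}_{\delta>0}$ is Cauchy in $C^0([0,T];L^1(\bbr^3))$. Subtracting the equations for $f_\delta$ and $f_{\delta'}$ and writing them in mild form, I would split the difference of the collision terms into a \emph{bulk} contribution, linear in $f_\delta-f_{\delta'}$ (the other argument of the bilinear operator being one of the two solutions) and hence bounded in $L^1$ by $C\,\|f_\delta-f_{\delta'}\|_{L^1}$ with $C$ depending only on the moment bounds, and an \emph{error} contribution supported where the centre-of-mass energy lies below $\max(\delta,\delta')$, which is bounded in $L^1$ by a quantity $\eta(\delta,\delta')$ tending to $0$ as $\delta,\delta'\to0$, the smallness being extracted from the finiteness of the negative moments $M_{-1}$ and $M_{-2}$. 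A Gronwall inequality on $[0,T]$ then gives $\sup_{0\le s\le T}\|f_\delta-f_{\delta'}\|_{L^1}\le \eta(\delta,\delta')\,e^{CT}\to0$, so $f_\delta\to f$ in $C^0([0,T];L^1(\bbr^3))$. The limit $f$ is non-negative since each $f_\delta$ is; the bounds on $M_m$ for $-1\le m\le 2$ pass to $f$ by Fatou's lemma; passing to the limit in the mild form of \eqref{bolt5} shows that $f$ solves the uncut massless Boltzmann equation; and the equation itself then shows that the collision term $Q(f,f)=\partial_s f$ lies in $C^0([0,T];L^1(\bbr^3))$, so $f\in C^1([0,T];L^1(\bbr^3))$.

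Uniqueness is obtained by the same mechanism: if $f$ and $g$ are two solutions with the stated properties on $[0,T]$, the difference of their (now genuinely singular but, thanks to the negative-moment bounds, well-defined) collision terms is bounded in $L^1$ by $C\,\|f-g\|_{L^1}$ with $C$ controlled by the common moment bounds, and Gronwall forces $f\equiv g$.

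I expect the decisive obstacle to be the removal of the cut-off in the second step. Because the particles are massless the collision kernel has a genuine pole at zero centre-of-mass energy (visible in \eqref{bolt3}), so the error term cannot be made small by kernel estimates alone; it must be absorbed using the negative moments, and the need to keep $M_{-1}$ and $M_{-2}$ finite and bounded is precisely what restricts the argument to a finite $s$-interval and accounts for the asymmetry between the moments assumed of the data ($-2\le m\le 2$) and those propagated by the solution ($-1\le m\le 2$).
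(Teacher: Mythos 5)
Your overall strategy is the paper's: cut off small relative momentum $\varrho$, take the global non-negative solutions of Proposition \ref{Prop.mod}, prove moment bounds uniform in the cutoff on a common interval $[0,T]$, show the regularised family is Cauchy by a Gronwall estimate in which the cutoff discrepancy is an error term controlled by negative moments, and pass to the limit. But two steps, as you state them, would fail. First, the stability estimate does not close in plain $L^1$. For $\sigma=C_1h^{-b}$ with $1<b<2$ the kernel obeys $\varrho^{2-b}/(pq)\leq C(pq)^{-b/2}$, and in the symmetrised difference (cf.\ \eqref{Qsymm}) every term leaves $|f_\delta-f_{\delta'}|$ integrated against a negative power of its own momentum variable -- either $q^{-b/2}$ directly in the loss-type terms, or through the angular integrals of $1/p'$, $1/q'$ in the gain-type terms -- and since $b/2>0$ such integrals are not bounded by $\|f_\delta-f_{\delta'}\|_{L^1}$ times moments of $f_\delta+f_{\delta'}$. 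One is forced to run the Gronwall argument in the weighted norm $\int|f_k-f_m|\bigl(1+\tfrac1p\bigr)d^3\!p$, and then the $1/p$-weighted gain terms require the genuinely nontrivial angular bound $\int_{\bbs^2}d\omega/p'\leq C\varrho^{-a}(p+q)^{-(1-a)}$ of Lemma \ref{Lem.Povzner1}, with $a$ chosen so that $a+b<2$, plus a Young-type splitting of $(p+q)^{-(1-a)}$; this is exactly where the error decays like $k^{-(2-a-b)}$ and where the restriction $b<2$ does its real work. Your claim that the bulk contribution is bounded by $C\|f_\delta-f_{\delta'}\|_{L^1}$ is the step that breaks, and repairing it is the technical heart of the proof rather than a routine Gronwall.

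Second, your handling of the negative moments is inconsistent: you say $M_{-2}$ need not be propagated, yet your own error estimate (and the weighted Gronwall just described) needs $M_{-2}(s)$, and intermediate moments between $-1$ and $-2$, at positive times, uniformly in the cutoff; finiteness of $M_{-2}(0)$ alone gives nothing at $s>0$. Moreover the differential-inequality system you write is not closed: $\tfrac{d}{ds}M_m$ for $-1\leq m\leq 2$ is bounded by an expression involving $M_{-2}$, and bounding $\tfrac{d}{ds}M_{-1}$ via Lemma \ref{Lem.Povzner1} again produces moments strictly below $-1$. The chain closes only at $m=-2$, using the exact identity $\int_{\bbs^2}(p')^{-2}d\omega=16\pi/\varrho^2$ of Lemma \ref{Lem.Povzner2} together with the angular integrability $\int_0^\pi\sin\theta\,\sin^{-b}(\theta/2)\,d\theta<\infty$ (again $b<2$), which is how Proposition \ref{Prop.mod} obtains $\sup_{[0,T]}M_{-2}\leq C_T$ uniformly in $\varepsilon$. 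The reduced range $-1\leq m\leq 2$ in the theorem refers only to the limit $f$, because the limit is taken in the weight $1+1/p$; for the approximating solutions the $m=-2$ bound is both proved and indispensable. The same weighted norm and negative-moment bounds are also what make your uniqueness Gronwall meaningful. Once these two points are repaired, your argument coincides with the paper's proof.
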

Since we have the FLRW scale-factor already, by solving the Einstein equations, we therefore obtain a solution of the coupled Einstein-Boltzmann equations with positive, zero or negative cosmological constant, at least for a finite amount of proper time. However, in the $\Lambda>0$ case, as noted above, we shall see that the whole infinite range in proper time $t$ corresponds to a finite amount of $s$-time, and by adjusting the size of the initial distribution function $f_0$ we can ensure that the solution exists for all proper time. The theorem also provides a continuation criterion: the solution can be extended in the time coordinate $s$ as long as the five moments corresponding to integer $m$ in the range $-2\leq m\leq 2$ remain finite.

\section{Einstein-Boltzmann system in the FLRW case with spatially flat topology}\label{s2}
We'll suppose for simplicity that the spatial curvature is zero, and then the spatially-flat FLRW models have the following metric:
\begin{align}\label{metric}
g=-dt^2+ R^2 (dx^2 + dy^2+dz^2),
\end{align}
where $R=R(t)>0$ is the scale factor, vanishing at the initial singularity which we'll locate at $t=0$. The Einstein equations reduce to equations for $R$, and, for a trace-free energy-momentum tensor corresponding to a radiation fluid source (so $p=\rho/3$) and positive cosmological constant, they are given by
\begin{align}
\frac{{\dot{R}}^{2}}{R^{2}}&=\frac{8 \pi}{3}\rho+\frac{\Lambda}{3},\label{Einstein1}\\
\frac{\ddot{R}}{R}&=-\frac{8\pi}{3} \rho+\frac{\Lambda}{3},\label{Einstein2}
\end{align}
with the overdot for $d/dt$, and where the energy density $\rho$ will be given in terms of the distribution function $f$ introduced below, and $\Lambda$ is the  cosmological constant, which we'll usually assume to be non-negative.

\subsection{The scale factor and conformal time}\label{ss2.1}
In the massless case, we can find the scale factor at the outset, so that the Einstein equations are solved and only the Boltzmann equation remains for consideration.

Adding the Einstein equations \eqref{Einstein1}--\eqref{Einstein2} we have
\begin{align*}
\frac{{\dot{R}}^{2}}{R^{2}}+\frac{\ddot{R}}{R}=\frac{2\Lambda}{3},
\end{align*}
which can be expressed as
\begin{align*}
\frac{d^2}{dt^2} (R^2)= \frac{4\Lambda}{3}R^2.
\end{align*}
We consider a solution which starts with an initial singularity at $t=0$. In the case of vanishing cosmological constant ($\Lambda=0$) we have for $t\geq 0$
\begin{align}\label{scale1}
R=C_1 t^{\frac12},
\end{align}
while for a positive cosmological constant ($\Lambda>0$) we set $\Lambda=3H^2$ and obtain
\begin{align}\label{scale2}
R=C_2 \sqrt{\sinh (2Ht)},
\end{align}
where $C_1$ and $C_2$ are some positive constants of integration related to the matter content\footnote{One could also consider a negative cosmological constant, $\Lambda=-3H^2$ when $R=C_3\sqrt{\sin(2HT)}$ and the metric recollapses. Since we eventually obtain a finite-time existence result, the solution in this case could last until recollapse, but this universe model is less attractive than the others.}. Note that $\dot{R}/R\rightarrow H$ as $t\rightarrow\infty$ so that this $H$ is the limiting Hubble parameter.

It will be convenient to introduce \emph{conformal time} $\tau$ by 
\begin{align*}
\frac{dt}{d\tau}= R,
\end{align*}
with the origins of $t$ and $\tau$ coinciding: this will be important -- the singularity at $t=0$ is also at $\tau=0$, and this will be where we wish to give data.

Then, the metric \eqref{metric} can be written as
\[
g=R^2(-d\tau^2+dx^2 + dy^2+dz^2)= R^2 \eta,
\]
where $\eta$ is the Minkowski metric\footnote{And not to be confused with conformal time, which for us is denoted by $\tau$.}. In the absence of a cosmological constant the relation between the usual cosmic time $t$ and the conformal time $\tau$ is given by
\begin{align}
\tau=\frac{2}{C_1} t^{\frac12},\label{conformal time}
\end{align}
where $C_1$ is the constant given in \eqref{scale1}. Hence when $\Lambda=0$, up to a constant the scale factor equals the conformal time, and both time coordinates, $t$ and $\tau$, go to infinity together. On the other hand, in the presence of a positive cosmological constant, i.e. with $\Lambda>0$, (\ref{conformal time}) holds approximately near the bang but there is only a finite amount of conformal time between the bang and future infinity (or $\scri$) at $t=\infty$. One calculates
\begin{align*}
\tau=  \int^t_0  \frac{ \sqrt{2} e^{Ht'} dt'}{ C_2 \sqrt{e^{4Ht'}-1}}= \frac{\sqrt{2}}{C_2H} \int_1^{e^{Ht}} \frac{dx}{\sqrt{x^4-1}} \leq \frac{\sqrt{2}}{C_2H} \int^{\infty}_1 \frac{dx}{\sqrt{x^4-1}}.
\end{align*}
The final expression gives the total amount of conformal time between the initial singularity and future infinity at $t=\infty$. Note that the integral is
\begin{align*}
\int^{\infty}_1 \frac{dx}{\sqrt{x^4-1}}= \frac{ \sqrt{\pi} \Gamma (\frac54)}{ \Gamma(\frac43)} \approx 1.31103.
\end{align*}

\subsection{The massless Boltzmann equation}\label{ss2.2}

For massless particles the momentum $p_\alpha$ as a one-form satisfies
\[
p_\alpha p_{\beta}g^{\alpha\beta} = 0,
\]
where the momentum $p_\alpha=(p_0, p_1, p_2,p_3)$ is defined by
\[
p_\alpha dx^\alpha=p_0dt+p_1dx+p_2dy+p_3dz
\]
(so these are coordinate indices) and is assumed to be future directed, i.e., $p^0>0$. It will often be convenient to write
\[{\bf{p}}:=(p_1,p_2,p_3),\]
and we denote for simplicity
\begin{equation}\label{pdef}
p:=|{\bf{p}}|=(\sum_{i=1}^3(p_i)^2)^{1/2},
\end{equation}
then it's easy to see that $p$ and $\bp$ are constant along geodesics, and for null momenta we have
\be\label{n1}
p^0=-p_0=R^{-1}p.
\ee
For consistency with the metric, we assume that the distribution function is also homogeneous and isotropic, so that it takes the form
\[
f=f(t,p).
\]

With this distribution function, the energy density $\rho$ is defined by
\begin{align}
\rho=\int_{\bbr^3} (p_0)^2f\omega_p=4\pi R^{-4} \int_{\bbr^+} f p^3\,  dp,\label{energy dentisy}
\end{align}
where $\omega_p=dp_1dp_2dp_3/(p^0\sqrt{-g})=d^3p/(pR^2)$ is the volume-form on the null cone for $p_\alpha$ (when integrating a function isotropic in momenta we have of course $d^3p=4\pi p^2dp$). The Boltzmann equation \eqref{Boltzmann1} is coupled to the Einstein equations \eqref{Einstein1}--\eqref{Einstein2} through the energy density \eqref{energy dentisy}. The last vestige of the Einstein equations is the relation between the constant $C_1$ in (\ref{scale1}) (or $C_2$ in (\ref{scale2}) if $\Lambda>0$) and the conserved quantity in (\ref{energy dentisy}). From (\ref{Einstein1}):
\[ \int_{\bbr^3} pfd^3\!p= R^4\rho=\frac{3}{32\pi}C_1^4\mbox{  or  }\frac{3H^2}{8\pi}C_2^4\]
which tie together the scale factor and $f$ via the energy density in the cases $\Lambda=0$ and $\Lambda>0$ respectively.

\medskip

The number current vector is in general
\[N_\alpha=\int_{\bbr^3} p_\alpha  f\omega_p,\]
and the only non-zero component, given isotropy, is
\begin{align}
N^0= 4\pi R^{-3} \int_{\bbr^+} f p^2\, dp.
\end{align}
The divergence conditions on the energy-momentum tensor and the number current vector show that the following are constants of the motion:
\begin{equation}\label{c1}R^3 N^0= 4\pi\int_{\bbr^+} p^2f\,dp,\;\;\;R^4\rho= 4\pi\int_{\bbr^+}p^3f\,dp,\end{equation}
which we shall want to be finite, which in turn imposes conditions on initial data for $f$.

We consider only binary collisions, in which particles with null momenta $p_\alpha,q_\alpha$ collide to produce particles with null momenta $p'_\alpha,q'_\alpha$, all four future-pointing. Conservation of momentum is assumed,
and then the post-collision momenta $p'_\alpha,q'_\alpha$ can be represented in terms of the pre-collision momenta $p_\alpha,q_\alpha$ and a unit 3-vector $\omega_i$ (there are different ways to achieve this -- the details of our parametrisation are in section \ref{s3}). We write $d\omega$ for the standard volume form on the unit 2-sphere, thought of as the sphere of $(\omega_i)$.

The Boltzmann equation (see e.g. \cite{LR}) for homogeneous and isotropic $f$ in an FLRW background reduces to
\begin{align}
\partial_t f = R^{-3} \int_{\bbr^3}\int_{\bbs^2}v_M \sigma(h,\omega)\Big(f(p')f(q')-f(p)f(q)\Big)d\omega d^3\!q.\label{Boltzmann1}
\end{align}
Here, the integral operator on the right hand side is the \emph{collision operator} and describes the effect of binary collisions between particles. With $p_\alpha,q_\alpha$ as the pre-collision momenta, the relative momentum $h$ and the total energy $s$ are defined by
\[
h=\sqrt{(p_\alpha-q_\alpha)(p^\alpha-q^\alpha)},\quad
s=-(p_\alpha+q_\alpha)(p^\alpha+q^\alpha),
\]
so that for massless particles $h^2=s=-2p_\alpha q^\alpha$, and the M{\o}ller velocity $v_M$ is defined by
\[
v_M=\frac{h\sqrt{s}}{p^0q^0}.
\]
For massless particles then (\ref{Boltzmann1}) simplifies to
\begin{equation}\label{bolt2}
\partial_t f = R^{-3} \int_{\bbr^3}\int_{\bbs^2}\frac{1}{p^0q^0}s\sigma(h,\omega)\Big(f(p')f(q')-f(p)f(q)\Big)d\omega d^3\!q.
\end{equation}

The quantity $\sigma$ in (\ref{Boltzmann1}) is called the scattering cross-section or scattering kernel and depends only on $h$ and the scattering angle $\Theta$, which in turn is defined by
\[
\cos\Theta=(p'_\alpha-q'_\alpha)(p^\alpha-q^\alpha)/h^2.
\]
One of the extra complications of the massless case can be seen by noting that in the massive case one has
\[p^0=(m^2+R^{-2}p^2)^{1/2}\]
for mass $m$, in place of (\ref{n1}) and at least away from the space-time singularity at $R=0$, this is bounded away from zero. Therefore in the massless case the right-hand-side of (\ref{bolt2}) has singularities at $p=0$ or $q=0$ which are absent in the massive case away from $R=0$ (or in Minkowski space where $R\equiv 1$).

The results we obtain depend critically on the behaviour of the scattering cross-section at large energy $s$ so we next discuss some possibilities.
\subsection{Scattering cross-sections}\label{ss2.3}
We recall the discussion of hard and soft scattering cross-sections in \cite{DEJ}, \cite{LR} and \cite{S10}, modified slightly for the massless case ($g$ in \cite{LR} is $h$ here, and for massless particles $s=g^2$). 
\begin{itemize}
 \item 
A cross-section is \emph{soft} if there exists a real $\gamma>-2$ and a real $b$ in the range $0<b<\mbox{min}(4,\gamma+4)$ and positive constants $c_1,c_2,c_3$ such that
\[c_1h^{-b}\sigma_0(\Theta)\leq\sigma(h,\Theta)\leq c_2h^{-b}\sigma_0(\Theta),\]
with
\[\sigma_0(\Theta)\leq c_3(\sin\Theta)^\gamma.\]
\item
A cross-section is \emph{hard} if there exists a real $\gamma>-2$, a real $a$ in the range $0\leq a\leq\gamma+2$ and a real $b$ in the range $0<b<\mbox{min}(4,\gamma+4)$ and positive constants $c_1,c_2,c_3$ such that

\[c_1h^a\sigma_0(\Theta)\leq \sigma(h,\Theta)\leq c_2(h^a+h^{-b})\sigma_0(\Theta),\]
with
\[\sigma_0(\Theta)\leq c_3(\sin\Theta)^\gamma.\]

\medskip

Note that soft cross-sections all decrease with increasing $h$ while hard cross-sections grow unless $a=0$.

\item A particular case of a hard cross-section is the so-called \emph{hard spheres} cross-section for which
\[\sigma(h,\Theta)=\sigma_0=\mbox{  constant},\]
and $a=0$. This cross-section was used in \cite{baz1} with a new time coordinate $d\tilde{t}=dt/R^3$ which we'll discuss below, in section \ref{ss2.5}.
\item 
The cross-section for {\emph{Israel particles}} was introduced in \cite{is} as a mathematically tractable relativistic counterpart of classical, massive Maxwell particles. Transformed to the massless case and with our definition\footnote{Note that $\sigma$ in \cite{is} is $h\sigma$ in our terms.} of $\sigma$ it is
\[\sigma(h,\Theta)=\Gamma(\Theta)/h^3,\]
with no restriction on $\Gamma$. If $\Gamma$ is constant then this is classified as soft with $b=3$.
% \item 
% 
% 
% In this paper we consider the scattering kernel of the following type:
% \begin{align}
% \sigma(h,\omega)\approx h^{-b}\sigma_0(\omega),\label{scattering1}
% \end{align}
% where $0\leq \sigma_0(\omega)\lesssim \sin^\gamma\Theta$ with $\gamma>-2$ and $0\leq b<\min(4,4+\gamma)$. According to \cite{S10} the scattering kernel corresponds to soft potentials for $b>0$ and to hard spheres for $b=0$. 
\end{itemize}
\subsection{Equilibrium solutions}
\label{ss2.4}
A distribution function of the form
\[f(t,p)=\mbox{exp}(-\alpha-\beta p),\]
with real constants $\alpha,\beta$ automatically gives zero on the right in (\ref{Boltzmann1}) by conservation of momentum and so automatically solves (\ref{Boltzmann1}), regardless of the choice of cross-section. Furthermore, as is familiar, these are the only solutions giving zero on the right. They are the \emph{equilibrium solutions} and $\alpha,\beta$ can be related to the constants of integration found in (\ref{c1}):
\[R^3N^0=8\pi e^{-\alpha}\beta^{-3},\;\;\;R^4\rho=24\pi e^{-\alpha}\beta^{-4}.\]
A distribution function of this form is perfectly well-defined at $t=0$ and in $t>0$ it can be written
\[f(t,p)=e^{-\alpha-\beta R(t)p^0},\]
when it is recognisably a Maxwell state with temperature $kT=1/(\beta R(t))$ which diverges initially and then decays to zero.

Evidently, given values for the constants of integration $R^3N^0$ and $R^4\rho$ there is a unique corresponding Maxwell state, and it is a question of interest whether other solutions converge to one of these states.

\subsection{Conformally rescaling the Boltzmann equation}\label{ss2.5}
As a general rule, conformal rescaling (see e.g. \cite{pr}) is the change of metric
\[g_{\alpha\beta}\rightarrow\hat{g}_{\alpha\beta}=\Omega^2g_{\alpha\beta},\]
with smooth $\Omega$. This is accompanied by
\[g^{\alpha\beta}\rightarrow\hat{g}^{\alpha\beta}=\Omega^{-2}g^{\alpha\beta},\]
and corresponding changes to the connection and curvature (see e.g. \cite{pr}). In the context of massless kinetic theory one has
\[\hat{p}_\alpha=p_\alpha,\;\;\hat{f}(x^\alpha,p_\beta)=f(x^\alpha,p_\beta).\;\;\hat{\omega}_p=\Omega^{-2}\omega_p,\;\;d\hat{\omega}=d\omega.\]
As consequences of these we also have
\[\hat{p}=p,\;\;\hat{s}=\Omega^{-2}s,\;\;\hat{h}=\Omega^{-1}h,\;\;\hat{p}^0=\Omega^{-2}p^0,\]
and so on. 

With the specific metric (\ref{metric}) of interest in this paper of course one doesn't need much of the theory of conformal rescaling: with $\Omega=R^{-1}$ we have $\hat{g}_{ab}=\eta_{ab}$, the flat metric and keeping track of powers of $\Omega$ corresponds to keeping track of powers of $R$. The Boltzmann equation (\ref{bolt2}) (not rescaled but with powers of $R$ introduced and simplified) can be written
\begin{equation}\label{bolt3} 
 \partial_t f = R^{-1} \int_{\bbr^3}\int_{\bbs^2}\frac{s\sigma}{pq}\Big(f(p')f(q')-f(p)f(q)\Big)d\omega d^3\!q,
\end{equation}
or in conformal time
\begin{equation}\label{bolt4} \partial_\tau f =  \int_{\bbr^3}\int_{\bbs^2}\frac{s\sigma}{pq}\Big(f(p')f(q')-f(p)f(q)\Big)d\omega d^3\!q.
 \end{equation}
We want to find a well-posed Cauchy problem with data at $t=0=\tau$ and clearly the possibility of this is now tied to the behaviour of $s\sigma$ for small $R$. We have
\[h^2=s=R^{-2}(pq-{\bf{p}}\cdot{\bf{q}}),\]
so that $h$ diverges as $O(R^{-1})$ near $t=0$. Now
\begin{itemize}
 \item For a {\emph{hard}} cross-section from section \ref{ss2.3} we have
\[s\sigma=O(h^{a+2})=O(R^{-2-a}),\]
and $a\geq 0$. The right-hand-side in (\ref{bolt4}) is $O(R^{-2-a})=O(\tau^{-2-a})$ near the bang surface and so diverges at least quadratically at $\tau=0$: we cannot expect solutions to exist for wide classes of data. %(The pole in $\tau$ is at least second-order so that we cannot even expect the Boltzmann equation to be Fuchsian.)
\item For a {\emph{soft}} cross-section from section \ref{ss2.3}, the first term in the integrand in (\ref{bolt4}) is
\[\frac{s\sigma}{pq}\sim \frac{h^{2-b}}{pq}=\frac{(pq-{\bf{p}}\cdot{\bf{q}})^{(2-b)/2}}{R^{2-b}pq}.\]
If $b> 2$ then this has a pole at $h=0$, which happens when the incoming particles have parallel momenta (or when one of them is zero, but there are already singularities at $pq=0$). If $b\leq 2$ then there is a pole at $R=0$ which we can seek to absorb in a redefinition of the time variable: introduce $s$ with\footnote{This $s$ is not to be confused with total energy $s$ which can always be eliminated as $h^2$.}
\be\label{sdef}ds=R^{b-2}d\tau=R^{b-3}dt,\ee
then
\[\partial_s f=R^{2-b} \partial_\tau f\]
and the factor $R^{b-2}$ is cancelled from (\ref{bolt4}). Near $t=0$, and using either (\ref{scale1}) or (\ref{scale2}) in (\ref{sdef}) gives
\[ds\sim t^{(b-3)/2}dt\mbox{   and  }s \sim t^{(b-1)/2}.\]
We need the singularity, which is at $t=0$, to be at $s=0$ (or at least at finite $s$) so we require $b>1$. Thus to avoid the pole at $h=0$ and to ensure the data surface is at $s=0$ we should restrict to soft cross-sections with $b$ in the range $1<b<2$ (the possibility $b=2$ is ruled out in Section 3).

\item Note that for the hard spheres cross-section $\sigma=\sigma_0=$constant, (\ref{bolt3}) becomes
\[\partial_t f = R^{-3} \int_{\bbr^3}\int_{\bbs^2}\frac{\sigma_0(pq-{\bf{p}}\cdot{\bf{q}})}{pq}\Big(f(p')f(q')-f(p)f(q)\Big)d\omega d^3\!q.\]
In \cite{baz1}, a remarkable explicit solution was given for this Boltzmann equation in an FLRW background. These authors redefine the time coordinate by
\[ds=R^{-3}dt\mbox{  so that  }s\sim C-2t^{-1/2}\mbox{  and  }\partial_sf=R^3\partial_tf,\]
which removes the factor $R^{-3}$ from the Boltzmann equation, which they then solve by an ingenious method. However this change of time coordinate pushes the initial singularity off to minus infinity in $s$. This is not a cause of concern in \cite{baz1}, where the authors evolve only into the future from a positive value of $t$, but it doesn't serve our purposes. In fact one can see that the distribution function in equation (22) of \cite{baz1} becomes negative in the past (since their ${\mathcal{K}}(\tau)$ becomes negative in the past).

\item For Israel particles, $s\sigma=\Gamma/h=O(R)$ which goes to zero at $\tau=0$. Thus the right-hand-side in (\ref{bolt4}) goes to zero, and even in terms of proper time, in (\ref{bolt3}), all powers of $R$ cancel and the Boltzmann equation is \emph{exactly} as it would be in flat space. None-the-less the pole at $h=0$ makes this case currently unmanagable.

\end{itemize}

\section{Solutions to the Boltzmann equation}\label{s3}
Now we seek to prove well-posedness of (\ref{bolt4}) with data at the initial singularity and, motivated by section \ref{ss2.3}, the family of scattering cross-sections
\be\label{cx1}\sigma(h,\Theta)=C_1h^{-b}\ee
with positive real $C_1$, not to be confused with $C_1$ in section \ref{ss2.1}, and real $b$ in the range $1<b<2$. After some general theory, in Section 3.1 we modify the Boltzmann equation by imposing a cut-off at small momentum, in order to remove a pole in the collision integral. We prove long-time existence of a positive distribution function for this modified Boltzmann equation in Section 3.1. Then in Section 3.2 we prove that this solution to the modified Boltzmann equation converges to a positive solution of the actual (unmodified) Boltzmann equation at least for a finite interval in $s$. This is the content of Theorem 1 and our main result.

We work in the rescaled picture, so that the unphysical null momentum is parametrised as
\be\label{cx2}\hat{p}^\alpha=(p,{\bf{p}})\mbox{  with  }p=\left(\Sigma_{i=1}^3(p_i)^2\right)^{1/2}.\ee
It will be convenient and will aid readability to omit all hats from now on but the reader is reminded that these momenta are unphysical and should be hatted.

The physical centre of mass energy is as before
\[h^2=2R^{-2}(pq-{\bf{p}}\cdot{\bf{q}})\]
but it is convenient to introduce
\[\varrho=Rh=\sqrt{2(pq-{\bf{p}}\cdot{\bf{q}})},\]
(again, strictly speaking $\varrho=\hat{h}$ but the current definition avoids the proliferation of hats and simplifies equations).
As noted, all collisions will be binary conserving four-momentum which as usual is expressed as
\[p_\alpha+q_\alpha=p'_\alpha+q'_\alpha,\]
with the out-momentum $p'$ parametrised in terms of a unit vector $\omega^i$ by
\begin{align}
\left(
\begin{array}{c}
p'\\
{\bf{p}}'
\end{array}
\right)=
\left(
 \begin{array}{c}
 \displaystyle
 \frac12(p+q)+\frac12({\bf{p}}+{\bf{q}})\cdot{\bf{\omega}}\\
  \displaystyle
  \frac12({\bf{p}}+{\bf{q}})
   +\frac12\varrho{\bf{\omega}} 
+\frac12(({\bf{p}}+{\bf{q}})\cdot{\bf{\omega}})\frac{({\bf{p}}+{\bf{q}})}{(p+q+\varrho)}
\\
\end{array}
\right),\label{p'2}
\end{align}
and similar for $q'$ with the sign switched on $\omega^i$. Note this is the same expression as for this problem in Minkowski case -- all appearances of the scale-factor have been eliminated from these expressions. 

For the Boltzmann equation with the cross-sections considered we have
\[ \partial_\tau f =  \int_{\bbr^3}\int_{\bbs^2}\frac{C_1\varrho^{2-b}}{R^{2-b}pq}\Big(f(p')f(q')-f(p)f(q)\Big)d\omega d^3q,\]
so we redefine the time-coordinate again, introducing $s$ by
\be\label{s1}ds=R^{b-2}d\tau=R^{b-3}dt.\ee
From the discussion around equation (\ref{sdef}) in section \ref{ss2.5} we know that, with the assumed restrictions on $b$, we can choose $s=0$ at $t=0$, so that the data-surface is fixed. It should be noted that, as we are free to choose $R$ as one or other of the choices (\ref{scale1}) or (\ref{scale2}) we simultaneously treat the cases $\Lambda=0$ and $\Lambda>0$. One important difference between the cases is that for $\Lambda=0$ there is an infinite amount of $s$-time in the future, but for $\Lambda>0$ there is only a finite amount of $s$-time before infinity in $t$. Since we eventually obtain finite-time existence in $s$, this could be sufficient for infinite time in $t$.

Now the Boltzmann equation becomes
\begin{equation}\label{bolt5} \partial_s f = C_1 \int_{\bbr^3}\int_{\bbs^2}\frac{\varrho^{2-b}}{pq}\Big(f(p')f(q')-f(p)f(q)\Big)d\omega d^3q.
 \end{equation}

Below, we show that the equation \eqref{bolt5} with \eqref{p'2} admits unique local-in-time solutions. We first collect some lemmas:

\begin{lemma}\label{Lem.varrho}
Let $p^\alpha$, $q^\alpha$, $p'^\alpha$, and $q'^\alpha$ be pre- and post-collision (unphysical) momenta given by (\ref{cx2}) and \eqref{p'2}. Then, the following holds:
\[
\varrho^2\leq 4\min (pq, p'q').
\]
\end{lemma}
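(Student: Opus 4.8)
The plan is to prove the bound in two halves: first $\varrho^2 \le 4pq$, which is a statement purely about the incoming momenta, and then $\varrho^2 \le 4p'q'$, which follows from the first half together with the conservation laws built into the parametrisation (\ref{p'2}). For the first half, recall that $\varrho^2 = 2(pq - \mathbf{p}\cdot\mathbf{q})$. Since $\mathbf{p}\cdot\mathbf{q} \ge -|\mathbf{p}||\mathbf{q}| = -pq$ by Cauchy--Schwarz, we immediately get $\varrho^2 = 2(pq - \mathbf{p}\cdot\mathbf{q}) \le 2(pq + pq) = 4pq$. That disposes of the $pq$ side with almost no work.

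For the $p'q'$ side, the key observation is that the parametrisation (\ref{p'2}) is engineered so that $p'^\alpha$ and $q'^\alpha$ are again future-pointing null vectors with $p'^\alpha + q'^\alpha = p^\alpha + q^\alpha$, and crucially the quantity $s = -2p_\alpha q^\alpha$ (equivalently $\varrho^2$ up to the $R$-factor, here in the rescaled flat picture $\varrho^2 = -2\hat p_\alpha \hat q^\alpha = 2(p'q' - \mathbf{p}'\cdot\mathbf{q}')$) is conserved: $(p+q)^2 - |\mathbf{p}+\mathbf{q}|^2 = (p'+q')^2 - |\mathbf{p}'+\mathbf{q}'|^2$, and expanding both sides using $p^2 = |\mathbf{p}|^2$ etc.\ gives $pq - \mathbf{p}\cdot\mathbf{q} = p'q' - \mathbf{p}'\cdot\mathbf{q}'$, i.e.\ $\varrho^2 = 2(p'q' - \mathbf{p}'\cdot\mathbf{q}')$. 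Now the identical Cauchy--Schwarz argument applied to the primed momenta gives $\varrho^2 = 2(p'q' - \mathbf{p}'\cdot\mathbf{q}') \le 4p'q'$. Combining the two estimates yields $\varrho^2 \le 4\min(pq, p'q')$.

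The only nontrivial point — and the step I would expect to need the most care — is verifying that the explicit formulas (\ref{p'2}) really do produce a null vector $p'^\alpha$ (i.e.\ that the stated $p'$ equals $|\mathbf{p}'|$) and that $\varrho^2$ is genuinely conserved by them; this is a direct but somewhat fiddly computation using $|\omega| = 1$, $p = |\mathbf{p}|$, $q = |\mathbf{q}|$, and the definition $\varrho^2 = 2(pq - \mathbf{p}\cdot\mathbf{q})$. Once one trusts (as is standard, and presumably recorded elsewhere in the paper) that the parametrisation enforces both null-ness and four-momentum conservation, the inequality is essentially just Cauchy--Schwarz applied twice. I would therefore structure the written proof as: (i) state $\varrho^2 = 2(pq-\mathbf{p}\cdot\mathbf{q})$ and bound it by $4pq$ via $|\mathbf{p}\cdot\mathbf{q}| \le pq$; (ii) invoke conservation of four-momentum from (\ref{p'2}) to get $\varrho^2 = 2(p'q' - \mathbf{p}'\cdot\mathbf{q}')$; (iii) repeat (i) with primes to get $\varrho^2 \le 4p'q'$; (iv) take the minimum.
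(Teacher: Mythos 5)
Your proposal is correct and follows essentially the same route as the paper: bound $\varrho^2 = 2(pq - \mathbf{p}\cdot\mathbf{q}) \le 4pq$, then use the fact that $\varrho$ is a collision invariant (four-momentum conservation built into \eqref{p'2}) to apply the same bound to the primed momenta. Your write-up just makes explicit the invariance step that the paper states in one line.
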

\begin{proof}
We have
\[
\varrho^2 =  2(pq -{\bf{p}}\cdot{\bf{q}})\leq 4pq
\]
Since $\varrho$ is a collision invariant, the inequality also holds for $p'q'$.
\end{proof}

\begin{lemma}\label{Lem.varrho2}
The relative momentum $\varrho$ can be written as
\[
\varrho^2 = 4pq\sin^2\frac{\theta}{2},
\]
where $\theta$ is the angle between the three-dimensional vectors ${\bf{p}}$ and ${\bf{q}}$.
\end{lemma}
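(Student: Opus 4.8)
The plan is to unwind the definition of $\varrho$ and recognise a half-angle identity. Recall from just above that $\varrho^2 = 2(pq - \mathbf{p}\cdot\mathbf{q})$, where $p = |\mathbf{p}|$ and $q = |\mathbf{q}|$ are the Euclidean norms of the spatial momenta. By the definition of the angle $\theta$ between the three-vectors $\mathbf{p}$ and $\mathbf{q}$, we have $\mathbf{p}\cdot\mathbf{q} = pq\cos\theta$, so that
\[
\varrho^2 = 2(pq - pq\cos\theta) = 2pq(1 - \cos\theta).
\]

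The remaining step is purely trigonometric: applying the half-angle formula $1 - \cos\theta = 2\sin^2(\theta/2)$ gives $\varrho^2 = 4pq\sin^2(\theta/2)$, which is the claimed identity. (As a sanity check, this is manifestly consistent with Lemma \ref{Lem.varrho}, since $\sin^2(\theta/2) \leq 1$ immediately yields $\varrho^2 \leq 4pq$.)

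I do not anticipate any obstacle here — the statement is an elementary rewriting of the definition of $\varrho$, and the only input beyond arithmetic is the standard half-angle identity. The lemma is included because this trigonometric form of $\varrho$ is the convenient one for the momentum estimates in what follows, where one needs to integrate powers of $\varrho$ against the collision kernel and the $\sin^2(\theta/2)$ form makes the angular integration transparent.
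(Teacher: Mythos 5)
Your proof is correct and is essentially identical to the paper's: both unwind $\varrho^2=2(pq-\mathbf{p}\cdot\mathbf{q})=2pq(1-\cos\theta)$ and apply the half-angle identity $1-\cos\theta=2\sin^2(\theta/2)$. No gaps.
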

\begin{proof}
This is a simple calculation. Note that
\begin{align*}
\varrho^2  = 2(p q - {\bf{p}}\cdot{\bf{q}}) = 2pq(1-\cos\theta) = 4 pq \sin^2\frac{\theta}{2}.
\end{align*}
\end{proof}

The following is a special case of the Povzner inequality. We refer to \cite{LR,SY} for more general versions of the inequality in the case of massive particles. 
\begin{lemma}\label{Lem.Povzner}
Let $p^\alpha$, $q^\alpha$, $p'^\alpha$, and $q'^\alpha$ be pre- and post-collision (unphysical) momenta given by \eqref{p'2}. Then, the following holds:
\[
(p')^2 + (q')^2 - p^2 - q^2 \leq 2 pq.
\]
\end{lemma}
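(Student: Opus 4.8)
The plan is to compute $(p')^2+(q')^2$ directly from the parametrisation (\ref{p'2}) and exploit the conservation law $p+q = p'+q'$ that is built into it (this is the zeroth component of four-momentum conservation, visible in the first row of (\ref{p'2})). Indeed, since $p' + q' = p+q$, one has the elementary identity
\[
(p')^2 + (q')^2 = (p'+q')^2 - 2p'q' = (p+q)^2 - 2p'q' = p^2+q^2 + 2pq - 2p'q'.
\]
Hence the claimed inequality $(p')^2+(q')^2-p^2-q^2\leq 2pq$ is \emph{equivalent} to the statement $p'q'\geq 0$, which is immediate because $p'=|{\bf p}'|\geq 0$ and $q'=|{\bf q}'|\geq 0$ by the definition (\ref{cx2}) of the unphysical momenta. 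So the proof reduces to verifying $p'+q'=p+q$, which is read straight off the top component of (\ref{p'2}) together with the sign-flipped analogue for $q'$: the ${\bf\omega}$-dependent terms $\tfrac12({\bf p}+{\bf q})\cdot{\bf\omega}$ cancel and one is left with $p'+q' = \tfrac12(p+q)+\tfrac12(p+q) = p+q$.

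The only point that needs a word of care is that one should check the parametrisation (\ref{p'2}) is genuinely consistent, i.e. that $p' = |{\bf p}'|$ really holds with $p'$ as given by the first row — otherwise the quantity ``$p'$'' appearing on the left of the Povzner inequality would not be the norm of ${\bf p}'$. This is the null condition for the outgoing momentum and follows from the construction of (\ref{p'2}); alternatively, since $\varrho$ is a collision invariant (used already in Lemma \ref{Lem.varrho}) and $\varrho^2 = 2(p'q'-{\bf p}'\cdot{\bf q}')$ while simultaneously $\varrho^2 = 2(pq-{\bf p}\cdot{\bf q})$, combined with momentum conservation ${\bf p}+{\bf q}={\bf p}'+{\bf q}'$, one recovers $p'+q'=p+q$ and the nullity is self-consistent. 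I would include a one-line remark to this effect rather than re-deriving the parametrisation.

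I do not expect any real obstacle here: the lemma is a soft consequence of energy conservation plus non-negativity of $|{\bf p}'|$, and the ``$2pq$'' on the right-hand side is not even tight (the true bound is $(p')^2+(q')^2-p^2-q^2 = 2(pq - p'q')$, which can be negative). If anything, the only thing to be careful about is bookkeeping signs in (\ref{p'2}) when writing down the expression for $q'$ — switching $\omega^i \to -\omega^i$ must leave $p'+q'$ invariant, which it manifestly does. The proof is therefore two or three lines: state $p'+q'=p+q$, expand $(p')^2+(q')^2$ as above, and note $p'q'\geq 0$.
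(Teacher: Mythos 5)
Your proof is correct, but it is organised differently from the paper's. The paper expands $(p')^2+(q')^2$ directly from the first row of \eqref{p'2}, getting $\tfrac12(p+q)^2+\tfrac12\bigl((\bp+\bq)\cdot\omega\bigr)^2-p^2-q^2$, then applies Cauchy--Schwarz in the form $\bigl((\bp+\bq)\cdot\omega\bigr)^2\leq|\bp+\bq|^2$ to land on the intermediate bound $pq+\bp\cdot\bq\leq 2pq$. You instead use the energy-conservation identity $p'+q'=p+q$ to rewrite the left side exactly as $2(pq-p'q')$ and reduce everything to $p'q'\geq 0$. Both are two-line computations from \eqref{p'2}; your version has the merit of exposing the exact identity (so the slack in the inequality is visible), while the paper's version never needs to know that $p'$ and $q'$ are nonnegative, let alone that $p'=|{\bf p}'|$. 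Note that the point you flag as needing care is genuinely where the work sits in your route: if you justify $p'\geq 0$, $q'\geq 0$ directly from the first row of \eqref{p'2} (rather than via the nullity/consistency of the parametrisation), you need $|(\bp+\bq)\cdot\omega|\leq|\bp+\bq|\leq p+q$, which is essentially the same Cauchy--Schwarz step the paper uses -- so the two proofs differ more in bookkeeping than in substance, and either justification (that one-line inequality, or the remark that \eqref{p'2} parametrises null outgoing momenta so $p'=|{\bf p}'|$) closes your argument.
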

\begin{proof}
By \eqref{p'2} we have
\begin{align*}
&(p')^2 + (q')^2 - p^2 - q^2\\
& = \bigg(\frac{p+q}{2}+\frac{(p_j+q_j)\omega^j}{2}\bigg)^2 + \bigg(\frac{p+q}{2}-\frac{(p_j+q_j)\omega^j}{2}\bigg)^2 - p^2 - q^2\\
& = \frac{(p+q)^2}{2} + \frac{((p_j+q_j)\omega^j)^2}{2}  - p^2 - q^2\\
& \leq \frac{(p+q)^2}{2} + \frac{(p_j+q_j)(p^j+q^j)}{2}  - p^2 - q^2\\
& = p q + p_j q^j\\
&\leq 2pq,
\end{align*}
where we used $p^2 = p_j p^j$ and $q^2 = q_j q^j$.
\end{proof}

\begin{lemma}\label{Lem.Povzner1}
Let $p^\alpha$, $q^\alpha$, $p'^\alpha$, and $q'^\alpha$ be pre- and post-collision (unphysical) momenta given by \eqref{p'2}. Then, for any $0<a<1$, there exists $C>0$ so that
\[
\int_{\bbs^2}\frac{1}{p'}d\omega = \int_{\bbs^2}\frac{1}{q'}d\omega \leq \frac{C}{\varrho^a (p+q)^{1-a}}.
\]
\end{lemma}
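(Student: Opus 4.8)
The plan is to estimate $\int_{\bbs^2} (p')^{-1}\,d\omega$ directly using the parametrisation \eqref{p'2}, by splitting the sphere into the region where $p'$ is comparable to $p+q$ and the region where $p'$ is small, and controlling the latter by the integrable singularity $(\sin\theta)^{-a}$ in two dimensions. First I would record from \eqref{p'2} that
\[
p' = \tfrac12(p+q) + \tfrac12({\bf p}+{\bf q})\cdot{\bf \omega} = \tfrac12\big((p+q) + |{\bf p}+{\bf q}|\cos\psi\big),
\]
where $\psi$ is the angle between ${\bf \omega}$ and ${\bf p}+{\bf q}$; since $|{\bf p}+{\bf q}|\le p+q$, this shows $0\le p'\le p+q$ and that $p'$ is small only when $\cos\psi$ is close to $-1$, i.e. near the south pole of the $\psi$-coordinate. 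Writing $L=|{\bf p}+{\bf q}|$ and $u=\cos\psi\in[-1,1]$, with $d\omega = 2\pi\,du$ after doing the trivial azimuthal integration, the integral becomes
\[
\int_{\bbs^2}\frac{d\omega}{p'} = 2\pi\int_{-1}^{1}\frac{2\,du}{(p+q)+Lu}
= \frac{4\pi}{L}\log\frac{(p+q)+L}{(p+q)-L}.
\]
So the whole question reduces to bounding $\frac{1}{L}\log\frac{(p+q)+L}{(p+q)-L}$ by $C\big(\varrho^a(p+q)^{1-a}\big)^{-1}$, and the key geometric input is that $(p+q)-L = (p+q) - |{\bf p}+{\bf q}|$ is controlled below by $\varrho$: indeed $(p+q)^2 - L^2 = (p+q)^2 - |{\bf p}+{\bf q}|^2 = 2(pq - {\bf p}\cdot{\bf q}) = \varrho^2$, hence $(p+q)-L = \varrho^2/\big((p+q)+L\big)\ge \varrho^2/\big(2(p+q)\big)$.

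With this in hand the estimate is routine. When $L\le\tfrac12(p+q)$ the logarithm is $O(1)$ and $1/L$ need not even be used (one uses instead $p'\ge\tfrac14(p+q)$ directly, which already gives the bound since $\varrho\le 2\sqrt{pq}\le p+q$ by Lemma \ref{Lem.varrho}). When $L>\tfrac12(p+q)$, write $\log\frac{(p+q)+L}{(p+q)-L}\le \log\frac{2(p+q)}{(p+q)-L}\le \log\frac{4(p+q)^2}{\varrho^2}$; using $\log x \le C_a x^{a/2}$ for $x\ge 1$ gives $\log\frac{4(p+q)^2}{\varrho^2}\le C_a\big((p+q)/\varrho\big)^{a}$, and combined with $1/L\le 2/(p+q)$ this yields $\frac{1}{L}\log(\cdots)\le C_a\,(p+q)^{a-1}\varrho^{-a} = C_a\,\varrho^{-a}(p+q)^{-(1-a)}$, as required. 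Finally the equality $\int_{\bbs^2}(p')^{-1}d\omega = \int_{\bbs^2}(q')^{-1}d\omega$ follows because $q'$ is obtained from $p'$ by the substitution ${\bf \omega}\mapsto-{\bf \omega}$, under which $d\omega$ is invariant.

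The only mild subtlety — the main obstacle, such as it is — is making sure the bound is uniform as $L\uparrow p+q$, i.e. as the three-momenta become parallel and $\varrho\to0$ simultaneously with $p'$ potentially vanishing; this is exactly why the identity $(p+q)^2-L^2=\varrho^2$ is essential, as it ties the vanishing of $(p+q)-L$ (which creates the logarithmic blow-up) to the vanishing of $\varrho$ (which appears on the right-hand side), and the restriction $a<1$ is precisely what is needed for $\varrho^{-a}$ to be an integrable-strength singularity after one more angular integration against $d^3q$ later in the argument. One should also note the degenerate case $L=0$ (when ${\bf p}+{\bf q}=0$), where $p'=\tfrac12(p+q)$ is constant on the sphere and the bound is immediate; this can be folded into the $L\le\tfrac12(p+q)$ case.
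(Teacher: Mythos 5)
Your proposal is correct and follows essentially the same route as the paper: both evaluate the angular integral exactly to get $\frac{4\pi}{|{\bf p}+{\bf q}|}\ln\frac{(p+q)+|{\bf p}+{\bf q}|}{(p+q)-|{\bf p}+{\bf q}|}$ and then exploit the identity $(p+q)^2-|{\bf p}+{\bf q}|^2=\varrho^2$ to convert the logarithmic singularity into the power $\varrho^{-a}(p+q)^{-(1-a)}$. The only difference is cosmetic: you bound the logarithm via a case split and $\log x\le C_a x^{a/2}$, whereas the paper absorbs everything into the boundedness of $(1+x^2)^{(1-a)/2}\ln\bigl(x+\sqrt{1+x^2}\bigr)/x$ on $[0,\infty)$ with $x=|{\bf p}+{\bf q}|/\varrho$.
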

\begin{proof}
For simplicity we write $n_\alpha = p_\alpha + q_\alpha$ such that
\[
n^0 = p + q,\quad {\bf{n}}\cdot\omega = (p_j + q_j)\omega^j,\quad |{\bf{n}}| = \sqrt{\sum_{j=1}^3(p_j + q_j)^2}.
\]
Then, we have
\begin{align*}
\int_{\bbs^2}\frac{1}{p'}d\omega & = \int_{\bbs^2} \frac{2}{n^0+ \bn\cdot\omega} d\omega\\
& = 4\pi\int_0^\pi \frac{\sin\theta}{n^0+|\bn|\cos\theta} d\theta\allowdisplaybreaks\\
& = \frac{4\pi}{|\bn|}\ln \bigg(\frac{n^0+|\bn|}{n^0 - |\bn|}\bigg)\\
& = \frac{8\pi}{|\bn|}\ln\bigg(\frac{n^0+|\bn|}{\varrho}\bigg),
\end{align*}
where we used $(n_0)^2 -|\bn|^2 = \varrho^2$. Since $n_0 = \sqrt{\varrho^2 +|\bn|^2}$, we obtain
\begin{align*}
\frac{8\pi}{|\bn|}\ln\bigg(\frac{n^0+|\bn|}{\varrho}\bigg) & = \frac{8\pi}{|\bn|}\ln\bigg(\frac{\sqrt{\varrho^2 +|\bn|^2}+|\bn|}{\varrho}\bigg)\\
& = \frac{8\pi}{\varrho}\frac{\ln\Big(\frac{|\bn|}{\varrho} +\sqrt{1+\frac{|\bn|^2}{\varrho^2}}\Big)}{\frac{|\bn|}{\varrho}}\allowdisplaybreaks\\
& = \frac{8\pi}{\varrho^a(n^0)^{1-a}}\bigg(\frac{n^0}{\varrho}\bigg)^{1-a}\frac{\ln\Big(\frac{|\bn|}{\varrho} +\sqrt{1+\frac{|\bn|^2}{\varrho^2}}\Big)}{\frac{|\bn|}{\varrho}}\\
& = \frac{8\pi}{\varrho^a(n^0)^{1-a}}\bigg(1+\frac{|\bn|^2}{\varrho^2}\bigg)^{\frac{1-a}{2}}\frac{\ln\Big(\frac{|\bn|}{\varrho} +\sqrt{1+\frac{|\bn|^2}{\varrho^2}}\Big)}{\frac{|\bn|}{\varrho}}.
\end{align*}
Here, we note that the quantity
\[
(1+x^2)^{\frac{1-a}{2}}\frac{\ln(x + \sqrt{1+x^2})}{x}
\]
is bounded on $[0,\infty)$. Hence, we conclude that
\[
\frac{8\pi}{|\bn|}\ln\bigg(\frac{n^0+|\bn|}{\varrho}\bigg)\leq \frac{C}{\varrho^a(n^0)^{1-a}}.
\]
The calculation for $q'^0$ is the same, and this completes the proof. 
\end{proof}

\begin{lemma}\label{Lem.Povzner2}
Let $p^\alpha$, $q^\alpha$, $p'^\alpha$, and $q'^\alpha$ be pre- and post-collision (unphysical) momenta given by \eqref{p'2}. Then, the following holds:
\[
\int_{\bbs^2}\frac{1}{(p')^2}d\omega = \int_{\bbs^2}\frac{1}{(q')^2}d\omega = \frac{16\pi}{\varrho^2}.
\]
\end{lemma}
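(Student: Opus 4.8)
The plan is to reduce the surface integral to an elementary one-dimensional integral by exploiting the axial symmetry of the integrand, exactly as in the proof of Lemma~\ref{Lem.Povzner1}. Writing $n_\alpha = p_\alpha + q_\alpha$ so that $n^0 = p+q$, $\bn\cdot\omega = (p_j+q_j)\omega^j$, and $|\bn| = \sqrt{\sum_j (p_j+q_j)^2}$, the parametrisation \eqref{p'2} gives $p' = \tfrac12(n^0 + \bn\cdot\omega)$, hence
\[
\int_{\bbs^2}\frac{1}{(p')^2}\,d\omega = 4\int_{\bbs^2}\frac{d\omega}{(n^0 + \bn\cdot\omega)^2}.
\]
I would then choose spherical coordinates on $\bbs^2$ with polar axis along $\bn$, so that $\bn\cdot\omega = |\bn|\cos\theta$ and $d\omega = \sin\theta\,d\theta\,d\phi$; the $\phi$-integration yields a factor $2\pi$.

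Next I would substitute $u = \cos\theta$ to get
\[
\int_{\bbs^2}\frac{1}{(p')^2}\,d\omega = 8\pi\int_{-1}^{1}\frac{du}{(n^0 + |\bn|u)^2} = \frac{8\pi}{|\bn|}\left(\frac{1}{n^0 - |\bn|} - \frac{1}{n^0 + |\bn|}\right) = \frac{8\pi}{|\bn|}\cdot\frac{2|\bn|}{(n^0)^2 - |\bn|^2}.
\]
Finally I would invoke the identity $(n^0)^2 - |\bn|^2 = \varrho^2$ already used in Lemma~\ref{Lem.Povzner1} (equivalently $n_\alpha n^\alpha = -2p_\alpha q^\alpha$ for null $p,q$), which collapses the right-hand side to $16\pi/\varrho^2$. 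The computation for $q'$ is identical: by \eqref{p'2} one has $q' = \tfrac12(n^0 - \bn\cdot\omega)$, and since $\omega\mapsto-\omega$ is a measure-preserving involution of $\bbs^2$, the integral is unchanged.

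There is essentially no obstacle here: the only point requiring a word of care is that $|\bn| < n^0$ strictly (so the $u$-integral is genuinely convergent and the boundary terms are finite), which holds precisely when $\varrho > 0$; in the degenerate case $\varrho = 0$ both sides are $+\infty$, so the stated equality remains correct as an identity in $[0,\infty]$. I would simply note this and present the three-line computation above.
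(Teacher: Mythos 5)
Your proposal is correct and follows essentially the same route as the paper: reduce to $p'=\tfrac12(n^0+\bn\cdot\omega)$, integrate over the sphere with polar axis along $\bn$, and use $(n^0)^2-|\bn|^2=\varrho^2$ to obtain $16\pi/\varrho^2$, with the $q'$ case handled by symmetry. The extra remark about the degenerate case $\varrho=0$ is harmless but not needed.
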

\begin{proof}
We use the notation $n^\alpha = p^\alpha + q^\alpha$ as in the previous lemma. A simple calculation shows that
\begin{align*}
\int_{\bbs^2}\frac{1}{(p')^2}d\omega & = \int_{\bbs^2}\frac{4}{(n^0 + \bn\cdot\omega)^2}d\omega\\
& = 8\pi\int_0^\pi \frac{\sin\theta}{(n^0 + |\bn|\cos\theta)^2} d\theta\\
& = \frac{8\pi}{|\bn|}\bigg(\frac{1}{n^0 - |\bn|} - \frac{1}{n^0 + |\bn|}\bigg)
 = \frac{16\pi}{\varrho^2}.
\end{align*}
The calculation for $q'$ is the same, and this completes the proof. 
\end{proof}

We now prove the existence of solutions to the equation \eqref{bolt5} with \eqref{p'2}. We first consider a modified Boltzmann equation to remove the singularity $(pq)^{-1}$. Here, we follow the results of \cite{ark,LR,SY}, where the authors considered massive particles. The arguments are standard (see Chapter 6 of \cite{CIP} for more details about the arguments), and we observe that the argument applies to the massless case with the modification we make. In the Section 3.2 we remove the modification and establish existence of positive solution for the unmodified Boltzmann equation.

\subsection{Solutions to the modified equation}\label{ss3.1}
We take $\varepsilon>0$ and consider the modified collision operator defined by
\[
Q_\varepsilon(f,f) = C_1\iint_{\varrho\geq\varepsilon}\frac{\varrho^{2-b}}{pq}\Big(f(p')f(q')-f(p)f(q)\Big)d\omega d^{3}\!q.
\]
Note that the quantity $\varrho^{2-b}/(pq)$ is bounded for $\varrho\geq\varepsilon$ with $1<b<2$ by Lemma \ref{Lem.varrho}. The modified operator $Q_\varepsilon$ is bounded in $L^1$ as follows:
\begin{align}
&\|Q_\varepsilon(f,f)\|_{L^1}\nonumber\\
&\leq C_1\iiint_{\varrho\geq\varepsilon}\frac{\varrho^{2-b}}{pq}|f(p')||f(q')| d\omega d^3\!q d^3\!p + C_\varepsilon\iiint_{\varrho\geq\varepsilon}|f(p)||f(q)| d\omega d^3\!q d^3\!p\nonumber\\
&\leq C_1\iiint_{\varrho\geq\varepsilon}\frac{\varrho^{2-b}}{p'q'}|f(p')||f(q') | d\omega d^3\!q' d^3\!p' + C_\varepsilon \|f\|_{L^1}^2\nonumber\\
&\leq C_\varepsilon \|f\|_{L^1}^2,\label{Qbdd}
\end{align} 
where we used Lemma \ref{Lem.varrho} and the standard change of variables
\begin{align}\label{dpdq}
\frac{1}{pq}d^3\!pd^3\!q = \frac{1}{p'q'}d^3\!p'd^3\!q'.
\end{align}
For $f,g\in L^1$ we consider the following expression
\begin{align}
&Q_\varepsilon(f,f)-Q_\varepsilon(g,g)\nonumber\\
&=\frac{C_1}{2}\iint_{\varrho\geq \varepsilon}\frac{\varrho^{2-b}}{pq}\Big((f+g)(p')(f-g)(q') + (f+g)(q')(f-g)(p')\nonumber\\
&\qquad\qquad\qquad - (f+g)(p)(f-g)(q) - (f+g)(q)(f-g)(p)\Big)d\omega d^3\!q.\label{Qsymm}
\end{align}
By the same estimates as in \eqref{Qbdd} we obtain
\begin{align}\label{QLip}
\|Q_\varepsilon(f,f)-Q_\varepsilon(g,g)\|_{L^1}
\leq C_\varepsilon \|f+g\|_{L^1}\|f-g\|_{L^1},
\end{align}
which shows that $Q_\varepsilon$ is Lipschitz continuous with respect to $f$. The estimates \eqref{Qbdd} and \eqref{QLip} show that the modified equation
\begin{align}\label{Bmod}
\partial_s f = Q_\varepsilon(f,f),\quad f(0)=f_0\geq 0,
\end{align}
admits a unique solution in $C^1([0,T];L^1)$ for some $T>0$, where $T$ depends only on $\varepsilon$ and $\|f_0\|_{L^1}$. 

Next, we show that the solutions of \eqref{Bmod} are non-negative for $f_0\geq 0$. Note that the solution satisfies
\begin{align}\label{intQ}
\int_{\bbr^3} Q_\varepsilon(f,f)d^3\!p = 0,
\end{align}
which is a consequence of \eqref{dpdq}. Hence, we obtain
\begin{align}\label{cons0}
\int_{\bbr^3} fd^3\!p = \int_{\bbr^3} f_0d^3\!p = \|f_0\|_{L^1}.
\end{align}
If $f$ is non-negative on $[0,T]$, then we have $\|f(T)\|_{L^1} = \|f_0\|_{L^1}$ and obtain existence on $[T,2T]$ by the same arguments again. In this way we obtain existence of solutions on any finite time interval, so that the solutions exist globally in time. In other words, the non-negativity of $f$ leads to the global-in-time existence. 

To prove the non-negativity of $f$ we rewrite the equation \eqref{Bmod} for a large $\mu>0$ as
\[
\partial_s f + \mu f \int_{\bbr^3} f d^3\!q = Q_\varepsilon(f,f)+\mu f \int_{\bbr^3} fd^3\!q,
\]
and introduce the following equation:
\begin{align}\label{Bmodh}
\partial_s h + \mu_0 h = \Gamma^\mu_\varepsilon(h),\quad h(0)=f_0\geq 0,
\end{align}
where $\mu_0 = \mu \|f_0\|_{L^1}$ and $\Gamma^\mu_\varepsilon$ is defined by
\begin{align*}
\Gamma^\mu_\varepsilon(h)=Q_\varepsilon(h,h)+\mu h \int_{\bbr^3} hd^3\!q.
\end{align*}
Note that the solutions of \eqref{Bmod} satisfy the equation \eqref{Bmodh} for any $\mu>0$ because of \eqref{cons0}. Since the solutions are unique, it is now enough to obtain the non-negativity of solutions to the equation \eqref{Bmodh}. We first notice that the operator $\Gamma^\mu_\varepsilon$ is non-negative, if $\mu$ is sufficiently large. Moreover, it is monotone as follows.
\begin{lemma}\label{Lem.monotone}
The operator $\Gamma^\mu_\varepsilon$ is monotone, i.e., for $f\geq g\geq 0$ in $L^1$ we have
\[
\Gamma^\mu_\varepsilon(f)\geq \Gamma^\mu_\varepsilon(g).
\]
\end{lemma}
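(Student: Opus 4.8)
The plan is to prove monotonicity by isolating the gain and loss contributions of $\Gamma^\mu_\varepsilon$ and checking that, for $\mu$ large, each is monotone in the appropriate sense. Write $\Gamma^\mu_\varepsilon(f) = G_\varepsilon(f) + \mu f\,\|f\|_{L^1} - L_\varepsilon(f)$ where
\[
G_\varepsilon(f)(p) = C_1\iint_{\varrho\geq\varepsilon}\frac{\varrho^{2-b}}{pq}f(p')f(q')\,d\omega\,d^3\!q,\qquad
L_\varepsilon(f)(p) = C_1\,f(p)\iint_{\varrho\geq\varepsilon}\frac{\varrho^{2-b}}{pq}f(q)\,d\omega\,d^3\!q,
\]
using $\int f\,d^3\!q = \|f\|_{L^1}$ from \eqref{cons0} only implicitly here (we keep $\|f\|_{L^1}$ literally as the integral). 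The gain term $G_\varepsilon$ is manifestly monotone: if $f\geq g\geq 0$ then $f(p')f(q')\geq g(p')g(q')$ pointwise and the kernel $\varrho^{2-b}/(pq)$ is non-negative on $\{\varrho\geq\varepsilon\}$, so $G_\varepsilon(f)\geq G_\varepsilon(g)$. The work is therefore entirely in the two remaining terms, which I would combine as
\[
\mu f(p)\int_{\bbr^3} f\,d^3\!q - L_\varepsilon(f)(p) = f(p)\left(\mu\int_{\bbr^3} f\,d^3\!q - C_1\iint_{\varrho\geq\varepsilon}\frac{\varrho^{2-b}}{pq}f(q)\,d\omega\,d^3\!q\right).
\]

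The key observation is that the bracketed factor is, for $\mu$ sufficiently large and uniformly in $p$, both non-negative and monotone increasing in $f$. For non-negativity and monotonicity of the bracket it suffices that
\[
C_1\iint_{\varrho\geq\varepsilon}\frac{\varrho^{2-b}}{pq}f(q)\,d\omega\,d^3\!q \leq \mu\int_{\bbr^3} f\,d^3\!q
\]
for all admissible $f$ and all $p$, i.e. that $\sup_p \big(C_1/\mu\big)\iint_{\varrho\geq\varepsilon}\varrho^{2-b}(pq)^{-1}\,d\omega\,(f(q)\,d^3\!q) / \int f\,d^3\!q \leq 1$. By Lemma \ref{Lem.varrho} we have $\varrho^{2-b}/(pq) = \varrho^{2-b}/(pq)$ with $\varrho^2\leq 4pq$, so on $\{\varrho\geq\varepsilon\}$ one gets $\varrho^{2-b}/(pq)\leq (4pq)^{1-b/2}/(pq) = 4^{1-b/2}(pq)^{-b/2}$; but this still degenerates as $p\to0$ when $b>0$, so instead I would bound more carefully using $\varrho\geq\varepsilon$ directly: on the integration set $1\leq (\varrho/\varepsilon)^{b}$ hence $\varrho^{2-b}\leq \varepsilon^{-b}\varrho^2 \leq 4\varepsilon^{-b}pq$, giving $\varrho^{2-b}/(pq)\leq 4\varepsilon^{-b}$, a bound \emph{uniform in $p$ and $q$}. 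Therefore $C_1\iint_{\varrho\geq\varepsilon}\varrho^{2-b}(pq)^{-1}f(q)\,d\omega\,d^3\!q \leq 4C_1\varepsilon^{-b}\cdot 4\pi\int_{\bbr^3} f\,d^3\!q$, so choosing $\mu\geq 16\pi C_1\varepsilon^{-b}$ makes the bracket non-negative. Since the subtracted integral $C_1\iint_{\varrho\geq\varepsilon}\varrho^{2-b}(pq)^{-1}f(q)\,d\omega\,d^3\!q$ and the added integral $\mu\int f\,d^3\!q$ are both linear and monotone in $f$, and we have arranged the latter to dominate the former, the bracket is a non-negative quantity that is increasing in $f$; multiplying by $f(p)\geq 0$, which is itself increasing in $f$, a product of two non-negative increasing functionals is increasing, so $f(p)\big(\mu\int f - C_1\iint \cdots f\big)$ is monotone in $f$ as well.

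Combining the two parts: $\Gamma^\mu_\varepsilon(f) = G_\varepsilon(f) + f(p)\big(\mu\int f\,d^3\!q - C_1\iint_{\varrho\geq\varepsilon}\varrho^{2-b}(pq)^{-1}f\,d\omega\,d^3\!q\big)$ is a sum of two monotone functionals of $f$, hence monotone, which is exactly $\Gamma^\mu_\varepsilon(f)\geq\Gamma^\mu_\varepsilon(g)$ for $f\geq g\geq0$. The main obstacle, and the point that needs care, is the product structure: monotonicity of each factor is not by itself enough for a general bilinear expression, so it is essential that \emph{both} $f(p)$ and the bracket are not only monotone but also non-negative on the relevant class of $f$ — this is why the lower cut-off $\varrho\geq\varepsilon$ (giving the uniform kernel bound $4\varepsilon^{-b}$) and the largeness of $\mu$ are both used. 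I would remark that the same value of $\mu$ works for all $f,g$ with $\|f\|_{L^1},\|g\|_{L^1}$ controlled by $\|f_0\|_{L^1}$, which is what is needed for the iteration scheme producing the non-negative solution of \eqref{Bmodh}.
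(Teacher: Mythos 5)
Your proof is correct. The essential estimate is the same one the paper uses --- on the cut-off region $\{\varrho\geq\varepsilon\}$ the kernel is uniformly bounded, $\varrho^{2-b}/(pq)=\varrho^{-b}\,\varrho^{2}/(pq)\leq 4\varepsilon^{-b}$ by Lemma~\ref{Lem.varrho}, so the loss/collision-frequency contribution can be absorbed by the added term $\mu f\int f\,d^3\!q$ once $\mu$ is large --- but the packaging differs. The paper works directly with the difference $\Gamma^\mu_\varepsilon(f)-\Gamma^\mu_\varepsilon(g)$, using the symmetrized bilinear splitting \eqref{Qsymm} in terms of $(f+g)$ and $(f-g)$, and matches the negative (loss) part against the identically symmetrized form of $\mu f\int f-\mu g\int g$; you instead split the operator itself as a manifestly monotone gain term plus $f(p)$ times a ``bracket'' $B(f)(p)=\mu\int f\,d^3\!q-C_1\iint_{\varrho\geq\varepsilon}\varrho^{2-b}(pq)^{-1}f(q)\,d\omega\,d^3\!q$, and show that for $\mu\geq 16\pi C_1\varepsilon^{-b}$ the bracket is both non-negative and monotone, so that the product is monotone (formally via $fB(f)-gB(g)=(f-g)B(f)+g\bigl(B(f)-B(g)\bigr)\geq 0$, which is just an asymmetric version of the paper's symmetric splitting). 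You correctly flag the one genuine subtlety --- that monotonicity of a product needs non-negativity of both factors --- and your first, rejected bound $(pq)^{-b/2}$ shows you saw why the cut-off, rather than Lemma~\ref{Lem.varrho} alone, is what makes the constant uniform in $p$. A small bonus of your version is the explicit threshold $\mu\geq 16\pi C_1\varepsilon^{-b}$, independent of $\|f_0\|_{L^1}$, where the paper only says ``sufficiently large $\mu$''; conversely, the paper's symmetrized form is the one reused verbatim in the Lipschitz estimate \eqref{QLip} and in Section~3.2, so its proof integrates more directly with the rest of the argument.
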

\begin{proof}
Let $f\geq g\geq 0$. We need to show 
\[
Q_\varepsilon(f,f)-Q_\varepsilon(g,g) +\mu f\int_{\bbr^3} fd^3\!q - \mu g\int_{\bbr^3} gd^3\!q\geq 0.
\]
By the expression \eqref{Qsymm}, we have
\begin{align*}
&Q_\varepsilon(f,f)-Q_\varepsilon(g,g)\nonumber\\
&=\frac{C_1}{2}\iint_{\varrho\geq \varepsilon}\frac{\varrho^{2-b}}{pq}\Big((f+g)(p')(f-g)(q') + (f+g)(q')(f-g)(p')\Big)d\omega d^3\!q\nonumber\\
&\quad- \frac{C_1}{2}\iint_{\varrho\geq \varepsilon}\frac{\varrho^{2-b}}{pq}\Big((f+g)(p)(f-g)(q) + (f+g)(q)(f-g)(p)\Big)d\omega d^3\!q,
\end{align*}
and the first integral is non-negative by the assumption. The second integral is estimated as follows.
\begin{align*}
&-\frac{C_1}{2}\iint_{\varrho\geq \varepsilon}\frac{\varrho^{2-b}}{pq}\Big((f+g)(p)(f-g)(q) + (f+g)(q)(f-g)(p)\Big)d\omega d^3\!q\\
&\geq - C_\varepsilon \bigg((f+g)(p)\int_{\bbr^3} (f-g)(q) d^3\!q + (f-g)(p)\int_{\bbr^3} (f+g)(q) d^3\!q\bigg).
\end{align*}
The remaining terms can be written as
\begin{align*}
&\mu f\int_{\bbr^3} fd^3q -\mu g\int_{\bbr^3} gd^3\!q\\
& = \frac{\mu}{2}\bigg((f+g)(p)\int_{\bbr^3} (f-g)(q)d^3\!q  + (f-g)(p) \int_{\bbr^3} (f+g)(q)d^3\!q\bigg).
\end{align*}
We combine the above estimates to conclude that 
$\Gamma^\mu_\varepsilon(f)- \Gamma^\mu_\varepsilon(g) \geq 0$ for a sufficiently large $\mu>0$. 
\end{proof}
\noindent Consider an iteration: let $h_0 = 0$, and define
\[
h_{n+1}(s) = e^{- \mu_0 s} f_0 + \int_0^s e^{-\mu_0 (s - r)}\Gamma^\mu_\varepsilon (h_n) (r) dr.
\]
Note that $h_1 = e^{-\mu_0 s} f_0\geq 0 = h_0$. Suppose inductively that $h_n \geq h_{n-1}$ for some $n\geq 1$. Then, by Lemma \ref{Lem.monotone} we have
\[
h_{n+1}-h_n = \int_0^s e^{-\mu_0 (s - r)}\Big(\Gamma^\mu_\varepsilon (h_n) - \Gamma^\mu_\varepsilon (h_{n-1}) \Big)dr\geq 0.
\]
Hence, we obtain a monotonically increasing sequence
\[
0\leq h_1\leq h_2\leq  \cdots.
\]
Moreover, the sequence $\{ h_n\}$ is bounded in $L^1$. To see this, note that $h_0$ satisfies $\|h_0\|_{L^1}\leq \|f_0\|_{L^1}$, so suppose inductively that $\|h_{n}\|_{L^1}\leq \|f_0\|_{L^1}$ for some $n\geq 0$. From the iteration we have
\begin{align*}
\int_{\bbr^3} h_{n+1} d^3\!p = e^{-\mu_0 s}\int_{\bbr^3} f_0 d^3\!p + \int_0^s e^{-\mu_0(s-r)}\int_{\bbr^3} \Gamma^\mu_\varepsilon(h_n)d^3\!pdr.
\end{align*}
By the property \eqref{intQ} we observe that
\[
\int_{\bbr^3} \Gamma^\mu_\varepsilon(h_n)d^3\!p = \mu \bigg(\int_{\bbr^3} h_n d^3\!p\bigg)^2 \leq \mu \|f_0\|_{L^1}^2.
\]
Hence, we estimate
\begin{align*}
\int_{\bbr^3} h_{n+1} d^3\!p &\leq e^{-\mu_0 s}\| f_0 \|_{L^1}+\mu \|f_0\|^2_{L^1} \int_0^s e^{-\mu_0(s-r)}dr = \|f_0\|_{L^1},
\end{align*}
where we used $\mu_0 = \mu \|f_0\|_{L^1}$, and this proves  $\|h_n\|_{L^1}\leq \|f_0\|_{L^1}$ for all $n$. Consequently, the sequence $\{ h_n\}$ has a limit $h$, which is non-negative and bounded by $\|f_0\|_{L^1}$, by the Monotone Convergence Theorem. The limit $h$ solves the equation \eqref{Bmodh}, and therefore we conclude that the solution $f$ to the equation \eqref{Bmod} is non-negative. We therefore obtain the existence part of the following result:

\begin{prop}\label{Prop.mod}
Let $\varepsilon>0$ be given. For any initial data $f_0\in L^1(\bbr^3)$ with $f_0\geq 0$ the modified equation \eqref{Bmod} has a unique solution $f\in C^1([0,\infty);L^1(\bbr^3))$ which is non-negative. Moreover, if $f_0$ satisfies
\[
\int_{\bbr^3}f_0(p)p^md^3\!p<\infty
\] 
for $-2\leq m\leq 2$, then there exists $T>0$ such that
\[
\sup_{0\leq s\leq T}\int_{\bbr^3} f(s,p)p^m d^3\!p\leq C_T,
\]
where $C_T$ is independent of $\varepsilon$.
\end{prop}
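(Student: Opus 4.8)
Since the existence, uniqueness and non-negativity of the solution $f\in C^1([0,\infty);L^1)$ of \eqref{Bmod}, together with the conservation \eqref{cons0} of $\|f(s)\|_{L^1}$, have already been obtained above, the plan is simply to propagate the five moments $M_m(s):=\int_{\bbr^3}f(s,p)\,p^m\,d^3\!p$, $m\in\{-2,-1,0,1,2\}$, on a short time interval, with bounds that do not see $\varepsilon$. The tool throughout is the weak form of the collision term: using \eqref{dpdq} and the standard symmetrisations (pre/post-collision exchange and the $p\leftrightarrow q$ symmetry of \eqref{p'2}), one has, for a weight $\psi$,
\[
\frac{d}{ds}\int_{\bbr^3}\psi(p)f\,d^3\!p=\frac{C_1}{2}\iiint_{\varrho\ge\varepsilon}\frac{\varrho^{2-b}}{pq}\big(\psi(p')+\psi(q')-\psi(p)-\psi(q)\big)f(p)f(q)\,d\omega\,d^3\!q\,d^3\!p,
\]
the singular weights $\psi=p^m$ being handled, as usual, by truncating $\psi$, differentiating using \eqref{Qbdd}, and passing to the limit by monotone convergence. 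First one reads off two conservation laws: $\psi\equiv1$ recovers \eqref{cons0}, and $\psi(p)=p$ together with $p'+q'=p+q$ from \eqref{p'2} gives $M_1(s)=M_1(0)$. Thus $M_0$ and $M_1$ are controlled, for all $s$ and all $\varepsilon$, by the data alone.

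For the top moment I would take $\psi(p)=p^2$ and apply the Povzner inequality of Lemma~\ref{Lem.Povzner}, $(p')^2+(q')^2-p^2-q^2\le 2pq$: under the non-negative measure $\varrho^{2-b}(pq)^{-1}f(p)f(q)\,d\omega\,d^3\!q\,d^3\!p$ this cancels the denominator, and since the resulting integrand $\varrho^{2-b}f(p)f(q)$ is non-negative the cut-off may be dropped, leaving $\tfrac{d}{ds}M_2\le 4\pi C_1\iint\varrho^{2-b}f(p)f(q)\,d^3\!p\,d^3\!q$. By Lemma~\ref{Lem.varrho}, $\varrho^{2-b}\le(4pq)^{(2-b)/2}$, and since $0<(2-b)/2<\tfrac12$ we have $(pq)^{(2-b)/2}\le(1+p)(1+q)$; hence $\tfrac{d}{ds}M_2\le C(M_0+M_1)^2$, a constant depending only on $b,C_1$ and the data. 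Therefore $M_2(s)\le M_2(0)+Cs$, which is in fact a global and $\varepsilon$-uniform bound.

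The hard part will be the moment $m=-2$, and it is here that finiteness of $T$ is forced. Taking $\psi(p)=p^{-2}$, the loss contribution $-(p^{-2}+q^{-2})$ is non-positive under the positive measure and may be discarded, while the gain term is computed exactly by Lemma~\ref{Lem.Povzner2}, $\int_{\bbs^2}\big((p')^{-2}+(q')^{-2}\big)d\omega=32\pi/\varrho^2$; after removing the (now harmless) cut-off this gives $\tfrac{d}{ds}M_{-2}\le 16\pi C_1\iint\varrho^{-b}(pq)^{-1}f(p)f(q)\,d^3\!p\,d^3\!q$. Writing $\varrho^2=4pq\sin^2(\theta/2)$ as in Lemma~\ref{Lem.varrho2}, the angular integral $\int_0^\pi(\sin(\theta/2))^{-b}\sin\theta\,d\theta$ converges precisely because $b<2$, and what remains is $C\,M_{-1-b/2}^2$. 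Since $-2<-1-b/2<-\tfrac32$, Hölder's inequality interpolates $M_{-1-b/2}\le M_{-2}^{\lambda}M_0^{1-\lambda}$ with $\lambda=\tfrac12+\tfrac b4\in(\tfrac34,1)$, so that
\[
\frac{d}{ds}M_{-2}\le C\,M_0^{\,1-b/2}\,M_{-2}^{\,1+b/2}.
\]
The exponent $1+b/2$ lies in $(\tfrac32,2)$, so this Riccati-type inequality does not yield a global bound, but comparison with $y'=C'y^{1+b/2}$ keeps $M_{-2}$ finite on an interval $[0,T]$ whose length depends only on $b,C_1,\|f_0\|_{L^1}$ and $M_{-2}(0)$, and — crucially — not on $\varepsilon$. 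I expect this Riccati estimate to be the main obstacle; it is also why the moment bound can only be asserted on a finite $s$-interval.

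Finally, with the $\varepsilon$-uniform bound on $M_{-2}$ over $[0,T]$ in hand, the moment $m=-1$ follows easily: discarding the loss term and using Lemma~\ref{Lem.Povzner1} with some fixed $a\in(0,2-b)$ (e.g.\ $a=(2-b)/2$), together with $p+q\ge\sqrt{pq}$ and Lemma~\ref{Lem.varrho2} once more, bounds $\tfrac{d}{ds}M_{-1}$ by $C\,M_{-(1+b)/2}^2$ with $-\tfrac32<-(1+b)/2<-1$; interpolating $M_{-(1+b)/2}\le M_{-2}^{\mu}M_0^{1-\mu}$ and inserting the bound for $M_{-2}$ turns the right-hand side into a constant on $[0,T]$, so $M_{-1}$ grows at most linearly there. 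Every intermediate moment $M_m$ with $-2\le m\le 2$ is then controlled by log-convexity, e.g.\ $M_m\le M_{-2}^{(2-m)/4}M_2^{(m+2)/4}$. Since all the bounds produced depend only on $b$, $C_1$ and the five moments of $f_0$, and never on $\varepsilon$, they furnish the constant $C_T$ and complete the proof.
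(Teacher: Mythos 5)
Your proposal is correct and follows essentially the same route as the paper's proof: the symmetrised weak form of $Q_\varepsilon$ together with the Povzner-type Lemmas \ref{Lem.Povzner} and \ref{Lem.Povzner2} and the angular computation via Lemma \ref{Lem.varrho2}, conservation of the zeroth moment, and a quadratic (Riccati/Gr\"onwall) differential inequality for the $m=-2$ moment giving a finite time $T$ and constant $C_T$ independent of $\varepsilon$. Your variants for $m=2$ (linear, global growth using conservation of $M_0$ and $M_1$ and $(pq)^{(2-b)/2}\le(1+p)(1+q)$) and the H\"older interpolation $M_{-1-b/2}\le M_{-2}^{1/2+b/4}M_0^{1/2-b/4}$ are mild sharpenings of the paper's cruder bounds $\varrho^{2-b}\le C(1+p^2q^2)$ and $q^{-1-b/2}\le 1+q^{-2}$, but the substance of the argument is the same.
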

\begin{proof}
Multiplying the equation \eqref{Bmod} by $p^m$ and integrating it over $\bp$ we obtain
\begin{align}\label{dsfm}
\frac{d}{ds} \int_{\bbr^3} f(p)p^md^3\!p = \int_{\bbr^3} Q_\varepsilon(f,f)(p)p^md^3\!p.
\end{align}
We use the property \eqref{dpdq} to write the right hand side as follows. For the gain term we observe
\begin{align*}
&C_1\iiint_{\varrho\geq\varepsilon} \frac{\varrho^{2-b}}{pq}f(p')f(q') p^md\omega d^3\!q d^3\!p\\
&= C_1\iiint_{\varrho\geq\varepsilon} \frac{\varrho^{2-b}}{p'q'}f(p')f(q') p^md\omega d^3\!q' d^3\!p'\allowdisplaybreaks\\
&= C_1\iiint_{\varrho\geq\varepsilon} \frac{\varrho^{2-b}}{pq}f(p)f(q) (p')^md\omega d^3\!q d^3\!p.
\end{align*}
By interchanging $p$ and $q$ we have
\begin{align*}
&C_1\iiint_{\varrho\geq\varepsilon} \frac{\varrho^{2-b}}{pq}f(p')f(q') p^md\omega d^3\!q d^3\!p\\
&= C_1\iiint_{\varrho\geq\varepsilon} \frac{\varrho^{2-b}}{pq}f(p)f(q) (q')^md\omega d^3\!q d^3\!p.
\end{align*}
Similarly, the loss term can be written as
\begin{align*}
&C_1\iiint_{\varrho\geq\varepsilon} \frac{\varrho^{2-b}}{pq}f(p)f(q) p^md\omega d^3\!q d^3\!p\\
&= C_1\iiint_{\varrho\geq\varepsilon} \frac{\varrho^{2-b}}{pq}f(p)f(q) q^md\omega d^3\!q d^3\!p.
\end{align*}
We combine the above expressions to write 
\begin{align}
&\int_{\bbr^3} Q_\varepsilon(f,f)(p)p^md^3\!p \nonumber\\
&= \frac{C_1}{2}\iiint_{\varrho\geq\varepsilon}\frac{\varrho^{2-b}}{pq}f(p)f(q)\Big((p')^m + (q')^m - p^m - q^m\Big)d\omega d^3\!q d^3\!p.\label{intQmod}
\end{align}
Let us consider the case $m=0$. In this case we obtain from \eqref{dsfm} and \eqref{intQmod} that
\[
\frac{d}{ds} \int_{\bbr^3} f(p)d^3p = 0.
\]
Since $f$ is non-negative, we can see that for all $s\geq 0$
\begin{align}\label{fL1}
\|f(s)\|_{L^1} = \|f_0\|_{L^1}.
\end{align}
In the case $m=2$ we apply Lemma \ref{Lem.Povzner} to \eqref{dsfm} and \eqref{intQmod} to obtain
\begin{align*}
\frac{d}{ds} \int_{\bbr^3} f(p)p^2d^3\!p &\leq C_1 \iiint_{\varrho\geq\varepsilon} \varrho^{2-b}f(p)f(q)d\omega d^3\!q d^3\!p\\
&\leq C\int_{\bbr^3}\int_{\bbr^3} (1+p^2q^2) f(p) f(q) d^3\!q d^3\!p\\
&\leq C \|f(s)\|^2_{L^1} +C\bigg( \int_{\bbr^3} f(p)p^2d^3\!p\bigg)^2,
\end{align*}
where we used Lemma \ref{Lem.varrho} and the fact that $0<2-b<1$. Note that the constant $C$ does not depend on $\varepsilon$. Then, applying \eqref{fL1} and Gr{\"o}nwall's inequality to the above we obtain that there exists $T>0$ such that
\begin{align}\label{fL12}
\sup_{0\leq s\leq T}\int_{\bbr^3} f(p)p^2 d^3\!p \leq C_T,
\end{align}
where $T$ and $C_T$ do not depend on $\varepsilon$.

In the case $m=-2$ we estimate the expression \eqref{intQmod} as
\begin{align*}
&\int_{\bbr^3} Q_\varepsilon(f,f)(p)\frac{1}{p^2}d^3\!p \nonumber\\
&\leq \frac{C_1}{2}\iiint_{\varrho\geq\varepsilon}\frac{\varrho^{2-b}}{pq}f(p)f(q)\bigg(\frac{1}{(p')^2} + \frac{1}{(q')^2} \bigg)d\omega d^3\!q d^3\!p
\end{align*}
and apply Lemma \ref{Lem.Povzner2} to obtain
\begin{align*}
\int_{\bbr^3} Q_\varepsilon(f,f)(p)\frac{1}{p^2}d^3\!p 
\leq C \int_{\bbr^3}\int_{\bbr^3}\frac{\varrho^{-b}}{pq}f(p)f(q) d^3\!q d^3\!p.
\end{align*}
Let us consider the integration over $\bq$ on the right hand side. By Lemma \ref{Lem.varrho2} we have
\begin{align*}
\int_{\bbr^3}\frac{\varrho^{-b}}{pq}f(q) d^3\!q 
& = 2^{-b}\int_{\bbr^3} \frac{1}{(pq)^{1+\frac{b}{2}}\sin^b(\theta/2)}f(q)d^3\!q\\
& = \frac{2^{1-b}\pi}{p^{1+\frac{b}{2}}}\int_{\bbr^+}\int_0^\pi \frac{q^2\sin\theta}{q^{1+\frac{b}{2}}\sin^b(\theta/2)}f(q)d\theta dq\\
& = \frac{2^{1-b}\pi}{p^{1+\frac{b}{2}}}\int_{\bbr^+} q^{1-\frac{b}{2}}f(q) dq \int_0^\pi \sin\theta \sin^{-b}(\theta/2) d\theta,
\end{align*}
where we used the fact that $f$ is isotropic. Note that\footnote{This is where we first need $b<2$.}
\begin{align*}
\int_0^\pi \sin\theta \sin^{-b}(\theta/2) d\theta
= \int_0^\pi 2\cos(\theta/2) \sin^{1-b}(\theta/2) d\theta
= \frac{4}{2-b}.
\end{align*}
Hence, we can estimate
\begin{align*}
\int_{\bbr^3}\frac{\varrho^{-b}}{pq}f(q) d^3\!q 
&\leq \frac{C}{p^{1+\frac{b}{2}}}\int_{\bbr^+} q^{1-\frac{b}{2}}f(q) dq\\
&\leq \frac{C}{p^{1+\frac{b}{2}}}\int_{\bbr^3} f(q)\frac{1}{q^{1+\frac{b}{2}}}d^3\!q.
\end{align*}
Considering the integration over $\bp$ we now obtain
\begin{align*}
\int_{\bbr^3} Q_\varepsilon(f,f)(p)\frac{1}{p^2}d^3\!p 
&\leq C \bigg(\int_{\bbr^3} f(p) \frac{1}{p^{1+\frac{b}{2}}}d^3\!p\bigg)^2\\
&\leq C\|f(s)\|_{L^1}^2+C\bigg(\int_{\bbr^3} f(p) \frac{1}{p^2}d^3\!p\bigg)^2.
\end{align*}
Applying \eqref{dsfm}, \eqref{fL1}, and Gr{\"o}nwall's inequality to the above, we obtain that there exists $T>0$ such that
\begin{align}\label{fL1-2}
\sup_{0\leq s\leq T}\int_{\bbr^3} f(p)\frac{1}{p^2}d^3\!p\leq C_T.
\end{align}
We combine the results \eqref{fL12} and \eqref{fL1-2} to complete the proof. 
\end{proof}

\subsection{Solutions to the unmodified Boltzmann equation}\label{ss3.2}
We need to remove the modification by letting $\varepsilon\to 0$. To be explicit, let $f_k$ denote the solution constructed in Proposition \ref{Prop.mod} with $\varepsilon = k^{-1}$ and $k=1,2,\cdots$. Then, it satisfies
\[
\partial_s f_k =  Q_k(f_k,f_k),\quad f_k(0) = f_0\geq 0,
\]
where
\[
Q_k(g,g) =  C_1 \int_{\bbr^3}\int_{\bbs^2} \mathds{1}_{\{\varrho\geq k^{-1}\}}\frac{\varrho^{2-b}}{pq}\Big(g(p')g(q')-g(p)g(q)\Big)d\omega d^3\!q,
\]
where $\mathds{1}_A$ is the usual characteristic function of $A$. 
The strategy is to show that $\{f_k\}$ is a Cauchy sequence in a suitable weighted $L^1$ space. 
Let us take $k<m$ and write $\partial_s f_k -\partial_s f_m$ as in \eqref{Qsymm}:
\begin{align*}
&\partial_s f_k -\partial_s f_m = Q_k(f_k,f_k) - Q_m(f_m,f_m)\allowdisplaybreaks\\
& = C_1\int_{\bbr^3}\int_{\bbs^2} \mathds{1}_{\{\varrho\geq k^{-1}\}}\frac{\varrho^{2-b}}{pq}\Big(f_k(p')f_k(q')-f_k(p)f_k(q)\\
& \qquad\qquad\qquad\qquad\qquad\qquad - f_m(p')f_m(q')+f_m(p)f_m(q)\Big)d\omega d^3\!q\\
&\quad - C_1\int_{\bbr^3}\int_{\bbs^2} \mathds{1}_{\{m^{-1}\leq\varrho\leq k^{-1}\}}\frac{\varrho^{2-b}}{pq}\Big(f_m(p')f_m(q')-f_m(p)f_m(q)\Big)d\omega d^3\!q\allowdisplaybreaks\\
& = \frac{C_1}{2}\int_{\bbr^3}\int_{\bbs^2} \mathds{1}_{\{\varrho\geq k^{-1}\}}\frac{\varrho^{2-b}}{pq}\\
&\qquad\times\Big( (f_k+f_m)(p')(f_k - f_m)(q') +(f_k +f_m)(q')(f_k - f_m)(p')\\
&\qquad\qquad - (f_k+f_m)(p)(f_k - f_m)(q) -(f_k +f_m)(q)(f_k - f_m)(p)\Big) d\omega d^3\!q\\
&\quad - C_1\int_{\bbr^3}\int_{\bbs^2} \mathds{1}_{\{m^{-1}\leq\varrho\leq k^{-1}\}}\frac{\varrho^{2-b}}{pq}\Big(f_m(p')f_m(q')-f_m(p)f_m(q)\Big)d\omega d^3\!q.
\end{align*}
Since $\partial_s |f_k - f_m | = \mbox{sgn}(f_k - f_m) \partial_s (f_k - f_m)$, we multiply the above expression by $\mbox{sgn}(f_k - f_m)$ to obtain
\begin{align}
&\partial_s| (f_k - f_m )(p)|\nonumber\\
& \leq \frac{C_1}{2}\int_{\bbr^3}\int_{\bbs^2} \frac{\varrho^{2-b}}{pq}\Big( (f_k+f_m)(p')|(f_k - f_m)(q')|+(f_k +f_m)(q')|(f_k - f_m)(p')|\nonumber\\
& \qquad\qquad\qquad\qquad+ (f_k+f_m)(p)|(f_k - f_m)(q)| \Big) d\omega d^3\!q\nonumber\\
&\quad + C_1\int_{\bbr^3}\int_{\bbs^2} \mathds{1}_{\{\varrho\leq k^{-1}\}}\frac{\varrho^{2-b}}{pq}\Big(f_m(p')f_m(q')+f_m(p)f_m(q)\Big)d\omega d^3\!q,\label{partial_f-f}
\end{align}
where we used $\mbox{sgn}(f_k - f_m)(p) (f_k - f_m)(p)=|(f_k - f_m)(p)|$ and the fact that the solutions are non-negative. 

Integrating \eqref{partial_f-f} over $\bp$ we obtain by \eqref{dpdq}
\begin{align*}
&\frac{d}{ds}\int_{\bbr^3}|(f_k-f_m)(p)|d^3\!p\\
& \leq  \frac{3C_1}{2}\iiint \frac{\varrho^{2-b}}{pq} (f_k+f_m)(p)|(f_k - f_m)(q)| d\omega d^3\!q d^3\!p\\
& \quad + C_1\iiint \mathds{1}_{\{\varrho\leq k^{-1}\}}\frac{\varrho^{2-b}}{pq}f_m(p)f_m(q)d\omega d^3\!q d^3\!p\\
& =: I_1 + I_2.
\end{align*}
The integral $I_1$ is estimated as follows:
\begin{align}
I_1&\leq C\iint \frac{1}{(pq)^{\frac{b}{2}}} (f_k+f_m)(p)|(f_k - f_m)(q)| d^3\!q d^3\!p\nonumber\\
&\leq C_T\int_{\bbr^3} |(f_k - f_m)(q)|\frac{1}{q^{\frac{b}{2}}} d^3\!q\nonumber\\
&\leq C_T\bigg(\int_{\bbr^3}|(f_k - f_m)(q)| d^3\!q + \int_{\bbr^3} |(f_k - f_m)(q)|\frac{1}{q} d^3\!q\bigg),\label{f-f-1}
\end{align}
where we used Lemma \ref{Lem.varrho} and the properties of solutions given in Proposition \ref{Prop.mod} with $1<b<2$. The integral $I_2$ is estimated as follows:
\begin{align}
I_2&\leq Ck^{-2+b}\iint \frac{1}{pq}f_m(p) f_m(q) d^3\!qd^3\!p\leq C_T k^{-2+b}. \label{f-f-2}
\end{align}
In a similar way, multiplying \eqref{partial_f-f} by $1/p$ and integrating it over $\bp$ we obtain
\begin{align*}
&\frac{d}{ds}\int_{\bbr^3}|(f_k - f_m)(p)|\frac{1}{p}d^3\!p\\
& \leq \frac{C_1}{2}\iiint  \frac{\varrho^{2-b}}{pq}\Big( (f_k+f_m)(p')|(f_k - f_m)(q')|+(f_k +f_m)(q')|(f_k - f_m)(p')|\nonumber\\
& \qquad\qquad\qquad\qquad+ (f_k+f_m)(p)|(f_k - f_m)(q)| \Big)\frac{1}{p} d\omega d^3\!q d^3\!p\nonumber\\
&\quad + C_1\iiint \mathds{1}_{\{\varrho\leq k^{-1}\}}\frac{\varrho^{2-b}}{pq}\Big(f_m(p')f_m(q')+f_m(p)f_m(q)\Big)\frac{1}{p}d\omega d^3\!qd^3\!p\allowdisplaybreaks\\
& = \frac{C_1}{2}\iiint  \frac{\varrho^{2-b}}{pq}(f_k+f_m)(p)|(f_k - f_m)(q)|\bigg(\frac{1}{p'}+\frac{1}{q'} +\frac{1}{p}\bigg) d\omega d^3\!q d^3\!p\\
&\quad + C_1\iiint \mathds{1}_{\{\varrho\leq k^{-1}\}}\frac{\varrho^{2-b}}{pq}f_m(p)f_m(q)\bigg(\frac{1}{p'}+\frac{1}{p}\bigg)d\omega d^3\!qd^3\!p,
\end{align*}
where we used the same argument as in \eqref{intQmod}. Let $J_1$, $J_2$, $J_3$, $J_4$, and $J_5$ denote the integrals in the last expression, for instance $J_1$ is the integral containing $1/p'$ in the first term, $J_2$ is the one containing $1/q'$ in the first term, etc. 

The integrals $J_3$ and $J_5$ are easily estimated as follows:
\begin{align*}
J_3 &= \frac{C_1}{2}\iiint  \frac{\varrho^{2-b}}{pq}(f_k+f_m)(p)|(f_k - f_m)(q)|\frac{1}{p} d\omega d^3\!q d^3\!p\\
&\leq C\int_{\bbr^3} (f_k + f_m)(p)\frac{1}{p^{1+\frac{b}{2}}}d^3\!p \int_{\bbr^3} |(f_k - f_m)(q)|\frac{1}{q^{\frac{b}{2}}} d^3\!q.
\end{align*}
Since $1<b<2$, we obtain
\begin{align}
J_3\leq C_T\bigg(\int_{\bbr^3}|(f_k - f_m)(q)| d^3\!q + \int_{\bbr^3} |(f_k - f_m)(q)|\frac{1}{q} d^3\!q\bigg).\label{f-f-3}
\end{align}
Similarly, we have
\begin{align}
J_5 & = C_1\iiint \mathds{1}_{\{\varrho\leq k^{-1}\}}\frac{\varrho^{2-b}}{pq}f_m(p)f_m(q)\frac{1}{p}d\omega d^3\!qd^3\!p\nonumber\\
&\leq C k^{-2+b}\int_{\bbr^3} f_m(p)\frac{1}{p^2}d^3\!p \int_{\bbr^3} f_m(q)\frac{1}{q} d^3\!q\nonumber\\
&\leq C_T k^{-2+b}.\label{f-f-4}
\end{align}
To estimate $J_1$, $J_2$, and $J_4$, we use Lemma \ref{Lem.Povzner1}. We first choose $0<a<1$ satisfying
\[
a+b<2,
\]
(which is possible since we have fixed $b<2$). Then, $J_1$ is estimated as follows:
\begin{align*}
J_1 & = \frac{C_1}{2}\iiint  \frac{\varrho^{2-b}}{pq}(f_k+f_m)(p)|(f_k - f_m)(q)|\frac{1}{p'} d\omega d^3\!q d^3\!p\\
& \leq C\iint \frac{\varrho^{2-b-a}}{pq(p+q)^{1-a}}(f_k+f_m)(p)|(f_k - f_m)(q)| d^3\!q d^3\!p\allowdisplaybreaks\\
& \leq C\iint \frac{1}{(pq)^{\frac{a+b}{2}}(p+q)^{1-a}}(f_k+f_m)(p)|(f_k - f_m)(q)| d^3\!q d^3\!p.
\end{align*}
Recall that Young's inequality can be written as $x+y \geq c x^{1-\gamma} y^{\gamma}$ for $0<\gamma<1$. To apply this inequality we choose $\gamma$ as
\[
0<\gamma\leq\frac{2-a-b}{2(1-a)}.
\]
Then, we observe that
\[
\frac{1}{(pq)^{\frac{a+b}{2}}(p+q)^{1-a}}\leq \frac{C}{p^{\frac{a+b}{2}+(1-\gamma)(1-a)}q^{\frac{a+b}{2}+\gamma(1-a)}},
\]
where $0<\frac{a+b}{2}+(1-\gamma)(1-a)< 2$ and $0<\frac{a+b}{2}+\gamma(1-a)\leq 1$ by the assumptions on $a$, $b$, and $\gamma$. Hence, $J_1$ can be estimated as 
\begin{align}
J_1&\leq C\iint \frac{1}{p^{\frac{a+b}{2}+(1-\gamma)(1-a)}q^{\frac{a+b}{2}+\gamma(1-a)}}(f_k+f_m)(p)|(f_k - f_m)(q)| d^3\!q d^3\!p\nonumber\\
&\leq C_T\int_{\bbr^3} \frac{1}{q^{\frac{a+b}{2}+\gamma(1-a)}}|(f_k - f_m)(q)| d^3\!q \nonumber\\
&\leq C_T\bigg(\int_{\bbr^3}|(f_k - f_m)(q)| d^3\!q + \int_{\bbr^3} |(f_k - f_m)(q)|\frac{1}{q} d^3\!q\bigg).\label{f-f-5}
\end{align}
The estimate of $J_2$ is exactly the same with $J_1$:
\begin{align}
J_2&\leq C_T\bigg(\int_{\bbr^3}|(f_k - f_m)(q)| d^3\!q + \int_{\bbr^3} |(f_k - f_m)(q)|\frac{1}{q} d^3\!q\bigg).\label{f-f-6}
\end{align}
With the exponents $a$ and $\gamma$ chosen as above the integral $J_4$ is estimated as follows:
\begin{align}
J_4&=C_1\iiint \mathds{1}_{\{\varrho\leq k^{-1}\}}\frac{\varrho^{2-b}}{pq}f_m(p)f_m(q)\frac{1}{p'} d\omega d^3\!qd^3\!p\nonumber\\
&\leq C\iint \mathds{1}_{\{\varrho\leq k^{-1}\}}\frac{\varrho^{2-b-a}}{pq(p+q)^{1-a}}f_m(p)f_m(q) d^3\!q d^3\!p\nonumber\allowdisplaybreaks\\
&\leq C k^{-2+a+b}\iint \frac{1}{p^{1+(1-\gamma)(1-a)}q^{1+\gamma(1-a)}}f_m(p)f_m(q) d^3\!q d^3\!p\nonumber\\
&\leq C k^{-2+a+b}\iint \frac{1}{p^{1+(1-\gamma)(1-a)}q^{1+\gamma(1-a)}}f_m(p)f_m(q) d^3\!q d^3\!p\nonumber\\
&\leq C_T k^{-2+a+b},\label{f-f-7}
\end{align}
where we used the fact that $1+(1-\gamma)(1-a)<2$ and $1+\gamma(1-a)<2$ and Proposition \ref{Prop.mod}. 

Let $X_{km}$ denote 
\[
X_{km}(s) = \int_{\bbr^3}|(f_k - f_m)(s,p)|\bigg(1+\frac{1}{p}\bigg)d^3\!p. 
\]
The estimates \eqref{f-f-1}--\eqref{f-f-7} show that $X_{km}$ satisfies the differential inequality:
\[
\frac{dX_{km}(s)}{ds}\leq C_Tk^{-2+a+b} +C_TX_{km}(s).
\]
Note that $f_k(0) = f_m(0) = f_0$ and $X_{km}(0) = 0$. Hence, we obtain on $[0,T]$,
\[
X_{km}(s)\leq C_T k^{-2+a+b}.
\]
Since $a+b<2$ and $k<m$, we conclude that $\{f_k\}$ converges in $L^1$ with weight $1+1/p$ as $k\to\infty$. We obtain the following result.
\addtocounter{thm}{-1}
\begin{thm}\label{thm1}
Let $f_0\in L^1(\bbr^3)$ be initial data satisfying $f_0\geq 0$ and 
\[
\int_{\bbr^3}f_0(p)p^m d^3\!p<\infty
\] 
for $-2\leq m\leq 2$. Then, there exists a positive value $T$ of the redefined time coordinate $s$ such that the Boltzmann equation \eqref{bolt5} has a unique solution $f\in C^1([0,T];L^1(\bbr^3))$ in $s$-time which is non-negative and satisfies
\[
\sup_{0\leq s\leq T}\int_{\bbr^3} f(s,p)p^m d^3\!p\leq C_T,
\]
but where now $-1\leq m\leq2$.
\end{thm}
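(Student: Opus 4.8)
The plan is to combine the uniform-in-$\varepsilon$ moment bounds from Proposition \ref{Prop.mod} with the contraction-type estimate for the family $\{f_k\}$ that has just been derived, and then pass to the limit $k\to\infty$. The argument is already essentially assembled: the differential inequality $\frac{dX_{km}(s)}{ds}\leq C_T k^{-2+a+b}+C_T X_{km}(s)$ together with $X_{km}(0)=0$ and Grönwall's inequality yields $X_{km}(s)\leq C_T k^{-2+a+b}$ on $[0,T]$. Since $a+b<2$, this exponent is negative, so $\{f_k\}$ is Cauchy in $C([0,T];L^1_{1+1/p})$, the space of functions integrable against the weight $1+1/p$. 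Let $f$ denote its limit. The first task is to check that $f$ is a genuine solution of \eqref{bolt5}: one multiplies $\partial_s f_k = Q_k(f_k,f_k)$ by a test function (or works directly in $L^1$), and shows that $Q_k(f_k,f_k)\to Q(f,f)$ in $L^1([0,T])$ using the $L^1_{1+1/p}$ convergence of $f_k$ together with the uniform moment bounds on $p^2$ and $p^{-2}$ from Proposition \ref{Prop.mod}; these control both the contribution of the region $\{\varrho\le k^{-1}\}$ (which vanishes like $k^{-2+b}$, exactly as in the estimates for $I_2$, $J_5$, $J_4$) and the difference $Q_k(f_k,f_k)-Q_k(f,f)$ (which is handled by the same bilinear estimates as $I_1$, $J_1$, $J_2$, $J_3$). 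Once $Q_k(f_k,f_k)\to Q(f,f)$ in $C([0,T];L^1)$ we get $f\in C^1([0,T];L^1)$ and $\partial_s f=Q(f,f)$, $f(0)=f_0$.

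Next, the qualitative properties transfer to the limit. Non-negativity is immediate: each $f_k\geq 0$ and $f_k\to f$ in $L^1$ along a subsequence pointwise a.e., so $f\geq 0$. For the moment bounds in the stated range $-1\leq m\leq 2$, I would note that the bounds for $m=0$ (conservation of the $L^1$ norm) and $m=2$ (the Grönwall argument using the Povzner inequality Lemma \ref{Lem.Povzner}) in Proposition \ref{Prop.mod} are uniform in $\varepsilon=k^{-1}$, hence pass to the limit by Fatou's lemma, giving $\sup_{[0,T]}\int f p^m\,d^3p\leq C_T$ for $m=0$ and $m=2$, and then by interpolation (or directly, since $p^m\leq 1+p^2$ for $0\leq m\leq 2$) for all $0\leq m\leq 2$. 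For $-1\leq m<0$ one uses that the weight $1+1/p$ already controls $\int f/p\,d^3p$, which is exactly the new norm in which convergence was proved; combined with $\int f\,d^3p<\infty$ this gives the full range $-1\leq m\leq 2$. It is worth remarking explicitly why the range shrinks from the $-2\leq m\leq 2$ available for the modified equation to $-1\leq m\leq 2$: the Cauchy estimate only closes in the weight $1+1/p$, not $1+1/p^2$, because the analogue of the estimates \eqref{f-f-5}--\eqref{f-f-6} for $J_1,J_2$ with the heavier weight $1/p^2$ would require an exponent $\frac{a+b}{2}+\gamma(1-a)$ large enough to be incompatible with integrability near $q=0$ when $b>1$.

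Finally, uniqueness. Given two solutions $f,g\in C^1([0,T];L^1(\bbr^3))$ of \eqref{bolt5} with the same data and both satisfying the moment bounds of the theorem, I would run the same computation that produced \eqref{partial_f-f}–\eqref{f-f-6}, but now with no cut-off and with $f_k,f_m$ replaced by $f,g$; there are no $I_2,J_4,J_5$ terms since there is no $\{\varrho\le k^{-1}\}$ region, and the remaining terms are bounded exactly as before using Lemma \ref{Lem.varrho}, Lemma \ref{Lem.varrho2}, Lemma \ref{Lem.Povzner1}, and the a priori bounds on $\int f\,d^3p$, $\int f/p\,d^3p$ (which are finite by hypothesis for $m=0$ and $m=-1$). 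This yields $\frac{d}{ds}\|f-g\|_{L^1_{1+1/p}}\leq C_T\|f-g\|_{L^1_{1+1/p}}$ with zero initial value, so $f\equiv g$ on $[0,T]$ by Grönwall.

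The main obstacle I expect is the convergence step $Q_k(f_k,f_k)\to Q(f,f)$, and in particular justifying that the singular factor $\varrho^{2-b}/(pq)$ does not spoil the limit: one must be careful that the bilinear estimates are applied with exponents strictly below the integrability thresholds (this is precisely the role of the choices $0<a<1$ with $a+b<2$ and $0<\gamma\leq\frac{2-a-b}{2(1-a)}$ made in the derivation above), and that the uniform $p^{-2}$-moment bound of Proposition \ref{Prop.mod}, although \emph{not} itself inherited by the limit, is available along the sequence and is exactly what is needed to make the $\{\varrho\le k^{-1}\}$ error terms $o(1)$. Everything else is a repackaging of estimates already present in the excerpt.
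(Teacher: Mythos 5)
Your construction follows the paper's route exactly: take the solutions $f_k$ of Proposition \ref{Prop.mod} with $\varepsilon=k^{-1}$, derive the differential inequality for $X_{km}$, conclude by Gr\"onwall that $\{f_k\}$ is Cauchy in $L^1$ with weight $1+1/p$, and pass to the limit. The paper leaves the limit bookkeeping implicit, and the parts you fill in (that $Q_k(f_k,f_k)\to Q(f,f)$ in $L^1$ using the uniform $m=0$, $m=-1$ moments and the smallness of the region $\{\varrho\le k^{-1}\}$, non-negativity by a.e.\ convergence, and the moment bounds for $-1\le m\le 2$ by Fatou) are correct and in the spirit of the paper.

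The genuine gap is in your uniqueness argument. You claim the Gr\"onwall estimate for two solutions $f,g$ closes ``using the a priori bounds on $\int f\,d^3p$ and $\int f/p\,d^3p$,'' i.e.\ only the $m=0$ and $m=-1$ moments guaranteed by the theorem's class. It does not: repeating \eqref{partial_f-f}--\eqref{f-f-6} with $f,g$ in place of $f_k,f_m$, the coefficient integrals require negative moments of order strictly below $-1$. In the $J_3$-type term the kernel bound $\varrho^{2-b}p^{-2}q^{-1}\le C\,p^{-1-\frac b2}q^{-\frac b2}$ forces control of $\int (f+g)(p)\,p^{-1-\frac b2}d^3p$ with $1+\frac b2\in(\frac32,2)$, and in the $J_1,J_2$-type terms the exponent on $p$ is $\frac{a+b}{2}+(1-\gamma)(1-a)\ge b>1$; for the approximating sequence these are supplied by the uniform $m=-2$ bound of Proposition \ref{Prop.mod}, which is precisely why the contraction estimate works there, but they are not part of the hypotheses of your uniqueness class. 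Your parenthetical that the $p^{-2}$ bound is ``not itself inherited by the limit'' is also incorrect, and correcting it is the natural repair: since $f_k(s)\to f(s)$ in $L^1$ for each $s$, Fatou's lemma along an a.e.\ convergent subsequence gives $\sup_{0\le s\le T}\int f(s,p)p^{-2}d^3p\le C_T$ for the limit as well. With that observation your Gr\"onwall argument does give uniqueness, but only within the class of solutions with finite moments for $-2\le m\le 2$ (consistent with the paper's continuation criterion); uniqueness in the bare class you state, with only $-1\le m\le 2$, is not established by the proposed estimate.
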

Note that:
\begin{itemize}
\item The Gr\"onwall arguments in Section 3.1 show that the rescaling $f_0(p)\rightarrow\lambda f_0(p)$ has the effect $T\rightarrow\lambda^{-1}T$ on the time of existence while, in the case $\Lambda>0$, we compute the effect on the total life-time in $s$, which we can call
\[s_\infty:=\int_0^\infty R^{3-b}dt,\]
as $s_\infty\rightarrow\lambda^{(b-3)/4}s_\infty$; thus
\[\frac{T}{s_\infty}\rightarrow\lambda^{-(1+b)/4}\frac{T}{s_\infty},\]
and by choice of $\lambda$ (small) we can ensure that this is greater than one. That means that for $\Lambda>0$ and small data we have long-time existence in proper time $t$.
 \item The change of time coordinate from $t$ to $s$ eliminates the scale-factor $R(t)$ from the equations, so that this result gives finite-time existence in $s$ for Einstein-Boltzmann with all signs of $\Lambda$; this translates as finite-time in $t$ for $\Lambda\leq 0$ and finite or (as above) possibly infinite time in $t$ for $\Lambda>0$.
 
  \item By the same argument, the theorem gives finite-time existence in proper-time for massless Boltzmann with this family of cross-sections in Minkowski space.
 \item The argument also yields a continuation criterion: provided that
 \[
\int_{\bbr^3}f(s,p)p^m d^3\!p<\infty \mbox{   for  }m=-2
\] 
 continues to hold, since it is straightforward to see that this moment with $m=2$ grows only linearly in time, and the other relevant values of $m$ can then be controlled, the theorem allows a larger $T$.
\end{itemize}

{\bf{Acknowledgements:}}
In the course of this work the authors have benefitted with financial support from a number of institutions. PT acknowledges financial support from the Spanish Ministry of Economy and Competitiveness, through the “Severo Ochoa Programme for Centres of Excellence in R\&D” (SEV-2015-0554) in October 2017, and continuing support from St John's College, Oxford; HL, EN and PT all acknowledge support from the 
Erwin Schr\"odinger Institute, Vienna in February 2018; HL and EN acknowledge support from St John's College and the Oxford Mathematical Institute in March 2019; HL was supported by the Basic Science Research Program through the National Research Foundation of Korea (NRF) funded by the Ministry of Science, ICT \& Future Planning (NRF-2018R1A1A1A05078275); we also acknowledge useful conversations with Prof John Stalker of Trinity College Dublin.

\bibliographystyle{abbrv}

\end{document}